\documentclass[a4paper,11pt]{article}
\pdfoutput=1 

\usepackage{jheppub} 

\usepackage[T1]{fontenc} 

\usepackage{indentfirst} 
\usepackage{amsfonts,amssymb,amsmath}
\usepackage{bm}          
\usepackage{url}
\usepackage{enumitem}
\usepackage[usenames,dvipsnames]{xcolor}
\usepackage{todonotes}
\usepackage{multicol}
\usepackage{MnSymbol}
\usepackage{tensor}
\usepackage{hyperref}
\usepackage{todonotes}

\usetikzlibrary{arrows}

\AtBeginDocument{}%

\newcommand{\corurl}{red}
\newcommand{\corcite}{ForestGreen}
\newcommand{\corlink}{blue}

\newcommand{\dd}{\mathrm{d\!l}}

\newcommand{\E}{\mathsf{E}}
\newcommand{\F}{\mathsf{F}}
\newcommand{\D}{\mathsf{D}}

\newtheorem{prop}{Proposition}

\hypersetup{linktocpage,colorlinks,urlcolor=\corurl,citecolor=\corcite,linkcolor=\corlink,
pdftitle={Hamiltonian GNH analysis of the parametrized unimodular extension of the Holst action},
pdfauthor={J.F. Barbero, B. Daz, J. Margalef-Bentabol, E.J.S. Villase\~nor}}





%
%




\def\QED{{\boldmath$\rule{0.5em}{0.5em}$}}                                
\def\markatright#1{\leavevmode\unskip\nobreak\quad\hspace*{\fill}{#1}}    
\def\qed{\markatright{\QED}}
\newenvironment{proof}[1][Proof]{\noindent\textbf{#1.} }{\qed\\}

\title{\boldmath Hamiltonian Gotay-Nester-Hinds analysis of the parametrized unimodular extension of the Holst action }

\author[a,b]{J. Fernando Barbero G.,}
\author[c,d,b]{Bogar D\'{\i}az,}
\author[e,b]{Juan Margalef-Bentabol}
\author[c,b]{and Eduardo J. S. Villase\~nor\,}

\affiliation[a]{Instituto de Estructura de la Materia, CSIC. Serrano 123, 28006 Madrid, Spain}
\affiliation[b]{Grupo de Teor\'{\i}as de Campos y F\'{\i}sica Estad\'{\i}stica. Instituto Gregorio Mill\'an (UC3M). Unidad Asociada al Instituto de Estructura de la Materia, CSIC}
\affiliation[c]{Departamento de Matem\'aticas, Universidad Carlos III de Madrid. Avda.\  de la Universidad 30, 28911 Legan\'es, Spain.}
\affiliation[d]{Departamento de F\'{\i}sica de Altas Energ\'{\i}as, Instituto de Ciencias Nucleares, Universidad Nacional Aut\'onoma de M\'exico, Apartado Postal 70-543, Ciudad de M\'exico, 04510, M\'exico}
\affiliation[e]{Institute for Gravitation and the Cosmos \& Physics Department, Penn State University, University Park, PA 16802, USA}

\emailAdd{fbarbero@iem.cfmac.csic.es}
\emailAdd{bodiazj@math.uc3m.es}
\emailAdd{juanmargalef@psu.edu}
\emailAdd{ejsanche@math.uc3m.es}

\abstract{
We give a detailed account of the Hamiltonian GNH analysis of the pa\-ra\-me\-tri\-zed unimodular extension of the Holst action. The purpose of the paper is to derive, through the clear geometric picture furnished by the GNH method, a simple Hamiltonian formulation for this model and explain why it is difficult to arrive at it in other approaches. We will also show how to take advantage of the field equations to anticipate the simple form of the constraints that we find in the paper.}

\keywords{Holst action; GNH method; Hamiltonian formulation; Unimodular gravity; Parametrized field theory}
\arxivnumber{}

\begin{document}
\maketitle
\flushbottom

\section{Introduction}{\label{sec_introduction}}

Although less popular than the Dirac algorithm \cite{Dirac}, the Gotay-Nester-Hinds (GNH) approach to the Hamiltonian formulation of mechanical systems and field theories defined by singular Lagrangians is very powerful and conceptually clean \cite{GNH1,GNH2,GNH3,BPV,margalef2018thesis}. Its geometric underpinnings provide a rigorous viewpoint that avoids many of the drawbacks of Dirac's method --in particular when applied to field theories-- while ultimately giving the same basic information. Several differences between both approaches should be noted:
\begin{itemize}
\item[i)] Dirac's method relies heavily on the language of classical mechanics. For instance, singular Lagrangian systems are characterized as those for which it is impossible to write all the velocities in terms of momenta; this leads to the ensuing appearance of constraints and the need to enforce their stability in order to guarantee the consistency of time evolution. The GNH method, on the other hand, is based on geometry. For instance, the notion of dynamical stability is translated into the requirement that the vector fields that encode the dynamics must be tangent to the phase space submanifold defined by the constraints. Although the rationale behind Dirac's approach can be rephrased in geometric terms (see \cite{GNH2,Diracnos,nos_Pontryagin}), this is non-standard and probably feels unnatural for many readers.
\item[ii)] The final descriptions provided by both methods are different, although it is possible in practice to go back and forth from one to the other. Dirac's method is designed to produce a Hamiltonian description in \emph{the full phase space}. This is useful for quantization because the canonical symplectic structure is retained and, hence, the idea of turning Poisson brackets into commutators can be implemented as in standard quantum mechanics. The presence of first class constraints is taken into account by using their quantized version to select \emph{physical subspaces} of the full Hilbert space of the system, while second class constraints are taken care of either by solving them or using the so-called Dirac brackets. The arena of the GNH approach is the primary constraint submanifold endowed with a presymplectic form obtained by pulling back the canonical symplectic structure of the full phase space. Even though the setting is slightly different, the constraints obtained with the Dirac method can also be found and, conceivably, quantized in a similar way.  
\item[iii)] From a practical point of view, the emphasis on geometry characteristic of the GNH method has some unexpected consequences. In particular, the possibility of altogether avoiding the use of Poisson brackets when dealing with the tangency conditions mentioned in i) is instrumental in circumventing the difficulties that crop up, for instance, when field theories are defined in spatial regions with boundary. Another consequence of the shift in perspective is the possibility of incorporating the, often subtle, functional analytic issues relevant for field theories that originate in the fact that their configuration spaces are infinite dimensional manifolds.
\item[iv)] Finally, it must be pointed out that the differences between both methods sometimes lead to insights within one of them that are difficult to arrive at in the other. In fact, this paper illustrates an instance of this phenomenon.
\end{itemize}

The main purpose of this work is to apply the GNH method to the study of the Holst \cite{Holst} action and some interesting generalizations of it, in particular, its parametrized unimodular version. Unimodular gravity is an alternative approach to general relativity with some interesting features, in particular regarding the role of the cosmological constant. Its Hamiltonian analysis in metric variables is well known \cite{Teitel}. However, and despite some claims to the effect \cite{Yamashita}, a similar analysis in terms of tetrads starting from the Holst action has not been performed yet. The Holst action has several features that make the study of its parametrized unimodular version quite attractive. In particular, it involves the Immirzi parameter and leads to the real Ashtekar formulation. On its turn, parametrized unimodular gravity is interesting because parametrization offers some useful insights on the problem of time \cite{Kuchar-1,UW}. Given the differences between the metric and tetrad formulations for general relativity, we deem it interesting to understand the Hamiltonian formulation of parametrized unimodular general relativity in the context of the Holst action. The hope --ultimately realized-- is that the analysis will provide an interesting perspective on the Hamiltonian formulation of unimodular gravity. As we will see, the final description given by the GNH approach --one of the results of the present paper-- is concise and clean (see \cite{nos} for a short partial summary involving just the Holst action). It naturally leads to the real Ashtekar formulation of general relativity and illuminates some issues related to the role of the time gauge and the Immirzi parameter, both at the classical and quantum level \cite{nos}.

The paper is structured as follows. After this introduction, we devote section \ref{sec_Action} to a discussion of the action principle used in the paper (the parametrized, unimodular version of the Holst action), the field equations, and the Lagrangian formulation. Section \ref{sec_Hamiltonian} contains a detailed discussion of the GNH analysis. The very simple final form of the constraints in the Hamiltonian formulation that we find suggests a streamlined approach to the Hamiltonian treatment of field theories linear in first order time derivatives. We discuss it in section \ref{sec_streamlined}. The literature on the Hamiltonian treatment of the Holst action and how the Ashtekar formulation can be derived from it is quite extensive. In order to put in perspective the results presented here we provide an appraisal of the main works on this subject in section \ref{sec_appraisal}. Although it is difficult to be exhaustive, we do try to provide a balanced assessment of the most important results and their relation to the present work. In section \ref{sec_conclusions} we give our conclusions and some comments. The contrast between the simple Hamiltonian formulation that we find here and the long computations necessary to arrive at it is somehow striking. We discuss this issue in the conclusions. The paper ends with several appendices where we give a number of auxiliary results and some computational details.

%
%
%
\section{Variational setting}\label{sec_Action}

\subsection{Action and equations of motion}\label{subsec_action}

We consider the following generalization of the Holst action
\[
S({\bm{e}},{\bm{\omega}},\Lambda,\Phi)=\int_{\mathbb{R}\times\Sigma}\Big(\big(\ast({\bm{e}}^I\wedge {\bm{e}}^J)+\frac{\varepsilon}{\gamma}{\bm{e}}^I\wedge {\bm{e}}^J\big)\wedge {\bm{F}}_{IJ}+\Lambda\big(\Phi^*\mathsf{vol}-\frac{1}{12}\epsilon_{IJKL}{\bm{e}}^I\wedge {\bm{e}}^J\wedge {\bm{e}}^K\wedge {\bm{e}}^L\big)\Big)\,.
\]
In this expression $\Sigma$ is a closed (i.e. compact without boundary), orientable, 3-dimensional manifold. This implies that $\Sigma$ is parallelizable and, hence, globally-defined frames exist. The cotetrads ${\bm{e}}^I$ are 1-forms, ${\bm{\omega}}^I_{\,\,\,J}$ is a $\mathfrak{so}(1,3)$-valued connection 1-form with curvature ${\bm{F}}^{I}_{\,\,\,J}:=\mathrm{d}{\bm{\omega}}^I_{\,\,\,J}+{\bm{\omega}}^I_{\,\,\,K}\wedge{\bm{\omega}}^K_{\,\,\,\,\,J}$. It satisfies the identity $\bm{DF}^I_{\,\,\,J}=0$, where $\bm{D}$ is defined by suitably extending ${\bm{D}}\alpha^I=\mathrm{d}\alpha^I+{\bm{\omega}}^I_{\,\,\,J}\wedge \alpha^J$. 
The tetrads are required to be non-degenerate, i.e.,  $\epsilon_{IJKL}{\bm{e}}^I\wedge {\bm{e}}^J\wedge {\bm{e}}^K\wedge {\bm{e}}^L$ is a volume form in $\mathbb{R}\times\Sigma$. They also have to satisfy the condition that, for all $\tau\in\mathbb{R}$, the hypersurfaces $\{\tau\}\times\Sigma$ are spacelike as measured by the metric $\bm{e}_I\otimes \bm{e}^I$. The Levi-Civita symbol $\epsilon_{IJKL}$ is totally antisymmetric and chosen to satisfy $\epsilon_{0123}=+1$. In more abstract terms, it should be interpreted as a volume form in $\mathfrak{so}(1,3)$.  The internal indices $I,J, \dots$ take the values $0, 1, 2, 3$. When needed, these indices will be raised and lowered with the invariant metric $\eta$ in $\mathfrak{so}(1,3)$ that, in an appropriate basis, can be written as $\eta=\mathrm{diag}(\varepsilon, +1,+1,+1)$, with $\varepsilon= -1$. We have included $\varepsilon$ to keep track of the spacetime signature and facilitate the extension of our results to the Euclidean case. The dual $\ast$ of $V^{IJ}$ is
\[
\ast V^{IJ}:=\frac{1}{2}\epsilon^{IJ}_{\,\,\,\,\,\,\,\,KL}V^{KL}\,.
\] 

Let $\mathcal{M}$ be a 4-dimensional manifold diffeomorphic to $\mathbb{R}\times\Sigma$. The volume form $\mathsf{vol}$ on $\mathcal{M}$ is defined by a fixed, background (i.e., non-dynamical) metric $g$ on $\mathcal{M}$. The inclusion of this fiducial metric is not necessary but it will allow us to reuse some computations from \cite{parame_scalar,EM}, in which case we have to restrict slightly the diffeomorphisms $\Phi\in\mathrm{Diff}(\mathbb{R}\times\Sigma,\mathcal{M})$. Indeed, we consider the diffeomorphisms such that for all $\tau\in\mathbb{R}$, $\{\tau\}\times\Sigma$ is a $\Phi^*\!g$-spacelike hypersurface. The action depends on $\Phi$ only through the $\Phi^*\mathsf{vol}$ term.

The scalar field $\Lambda\in C^\infty(\mathbb{R}\times\Sigma)$ is \emph{dynamical}  and not the cosmological constant (at least at this stage). It plays the role of a Lagrange multiplier enforcing the \emph{parametrized unimodularity condition}
    \begin{equation}\label{uni-cond}
    \frac{1}{12}\epsilon_{IJKL}{\bm{e}}^I\wedge {\bm{e}}^J\wedge {\bm{e}}^K\wedge {\bm{e}}^L=\Phi^*\mathsf{vol}\,.
    \end{equation}

Notice that the action is defined on $\mathbb{R}\times\Sigma$. This notwithstanding, as $\mathcal{M}$ is diffeomorphic to $\mathbb{R}\times\Sigma$, it is straightforward to change the viewpoint and define it on $\mathcal{M}$, in which case the dynamical diffeomorphisms would take $\mathcal{M}$ to another manifold $\mathcal{N}$.

The Immirzi parameter is denoted as $\gamma$ ($\neq0$). It is convenient to introduce the invariant $SO(1,3)$  tensor
\begin{equation}\label{PIJKL}
P_{IJKL}:=\frac{1}{2}\left(\epsilon_{IJKL}+\frac{\varepsilon}{\gamma}\eta_{IK}\eta_{JL}-\frac{\varepsilon}{\gamma}\eta_{JK}\eta_{IL}\right)\,.
\end{equation}
Its main properties, including the form of its inverse $[P^{-1}]^{IJKL}$ (which exists only if $\gamma^2\neq \varepsilon$) can be found in Appendix \ref{appendix_useful_results}.

By using $P_{IJKL}$ we can rewrite the action as
\begin{equation}\label{action}
S({\bm{e}},{\bm{\omega}},\Lambda,\Phi)=\int_{\mathbb{R}\times\Sigma}\Big(P_{IJKL}{\bm{e}}^I\wedge {\bm{e}}^J\wedge {\bm{F}}^{KL}+\Lambda\big(\Phi^*\mathsf{vol}-\frac{1}{12}\epsilon_{IJKL}{\bm{e}}^I\wedge {\bm{e}}^J\wedge {\bm{e}}^K\wedge {\bm{e}}^L\big)\Big)\,.
\end{equation}

The field equations are obtained by varying this action with respect to the dynamical variables ${\bm{e}}^I$, ${\bm{\omega}}^I_{\,\,\,J}$, $\Lambda$ and $\Phi$. The variations with respect to the tetrads ${\bm{e}}^I$ and the connection ${\bm{\omega}}^I_{\,\,\,J}$ give the equations
\begin{subequations}
\begin{align}
&{\bm{e}}^{[I}\!\wedge {\bm{De}}^{J]}=0\,,\label{e2}\\
&2P_{IJKL}{\bm{e}}^J\wedge {\bm{F}}^{KL}-\frac{1}{3}\Lambda \epsilon_{IJKL}{\bm{e}}^J\wedge {\bm{e}}^K\wedge {\bm{e}}^L=0\,.\label{e1}
\end{align}
\end{subequations}

The variations with respect to $\Lambda$ give the unimodularity condition \eqref{uni-cond}. As the field equations of parametrized theories imply that the dynamical diffeomorphisms are always arbitrary, this condition only restricts the possible values of the tetrads.

Finally, the variations with respect to the dynamical diffeomorphisms $\Phi$ (which we know from the parametrization procedure that do not furnish any 
additional conditions) can be written in terms of Lie derivatives $\pounds_{V_\Phi}$ [the variation of a diffeomorphism $\Phi:\mathbb{R}\times\Sigma\rightarrow \mathcal{M}$ can be represented by a vector field $V_\Phi\in\mathfrak{X}(\mathcal{M})$] as
\begin{equation}\label{Zvariation}
D_{(\Phi,V_\Phi)}S=-\int_{\mathcal{M}}\pounds_{V_\Phi}(\Phi^{-1*}\Lambda)\mathsf{vol}\,.
\end{equation}
The vanishing of the integral in \eqref{Zvariation} for every $V_\Phi$  implies that $\mathrm{d}(\Phi^{-1*}\Lambda)=0$ and, hence, $\Phi^{-1*}\mathrm{d}\Lambda=0$. Since $\Phi$ is a diffeomorphism we conclude that $\mathrm{d}\Lambda=0$. This last field equation tells us that $\Lambda$ is actually a constant (an \emph{integration constant} as usually stated in the traditional literature on this subject, see \cite{Kuchar-1,Teitel}). As a consequence, and given its role in \eqref{e1}, $\Lambda$ becomes a cosmological constant, of an arbitrary magnitude, through a dynamical mechanism. The equation $\mathrm{d}\Lambda=0$ is, as we mentioned before, redundant because it can be obtained from \eqref{e1} by taking
\begin{equation}\label{eq: consecuencia Lambda=cte}
0={\bm{D}}\big(P_{IJKL}{\bm{e}}^J\wedge {\bm{F}}^{KL}-\frac{1}{3}\Lambda \epsilon_{IJKL}{\bm{e}}^J\wedge {\bm{e}}^K\wedge {\bm{e}}^L\big)\,,
\end{equation}
using the identity ${\bm{DF}}_{IJ}=0$, the fact that, for non-degenerate tetrads, \eqref{e2} is equivalent to ${\bm{De}}^I=0$ (see \cite{Cattaneo1}), and the non-degeneracy of $\epsilon_{IJKL}{\bm{e}}^I\wedge {\bm{e}}^J\wedge {\bm{e}}^k\wedge {\bm{e}}^L$.

Notice that computing the covariant differential of ${\bm{De}}^I=0$ we get the identity ${\bm{F}}^I_{\,\,\,J}\wedge {\bm{e}}^J=0$, hence, plugging this into \eqref{e1}, the $\gamma$-dependent terms drop out and \eqref{e2} and \eqref{e1} turn into the field equations given by the usual Hilbert-Palatini action with a cosmological constant.

\subsection{Lagrangian formulation}\label{subsec_lagrangian}

In order to obtain the Lagrangian from the action \eqref{action}, we need to perform a 3+1 decomposition. The manifold $\mathbb{R}\times\Sigma$ is naturally foliated by the 3-dimensional hypersurfaces $\{\tau\}\times \Sigma$ with $\tau\in \mathbb{R}$. The tangent vectors to the parametrized curves $c_p:\mathbb{R}\rightarrow \mathbb{R}\times \Sigma:\tau\mapsto (\tau,p)$, with $p\in \Sigma$, define a vector field $\partial_\tau\in\mathfrak{X}(\mathbb{R}\times\Sigma)$. For each $\tau\in\mathbb{R}$ we introduce the embedding $\jmath_\tau:\Sigma\rightarrow \mathbb{R}\times\Sigma:p\mapsto (\tau,p)$.

In terms of these geometric elements, we can write for any 4-form $\mathcal{L}$
\[
\int_{\mathbb{R}\times\Sigma}\mathcal{L}=\int_\mathbb{R}\mathrm{d}\tau\int_\Sigma \jmath_\tau^*\imath_{\partial \tau}\mathcal{L}\,.
\]
If the preceding integral is the action for a particular field theory, $\mathcal{L}$ is a \emph{Lagrangian 4-form} and the real function
\[
\tau\mapsto\int_\Sigma\jmath_\tau^*\imath_{\partial \tau}\mathcal{L}
\]
can often be interpreted as being determined by a Lagrangian $L:TQ\rightarrow \mathbb{R}$ (a real function in the tangent bundle $TQ$ of a configuration space $Q$) evaluated on curves in $Q$.

Given differential forms of arbitrary degree in $\mathbb{R}\times\Sigma$, it is convenient to build other differential forms adapted to the foliation defined by the Cartesian product $\mathbb{R}\times\Sigma$. If ${\bm{\alpha}}\in\Omega^p(\mathbb{R}\times\Sigma)$, with $p=0,\ldots,4$, we define its transverse and tangent parts
\begin{align}
\begin{split}
&\alpha_{\mathrm{t}}:=\imath_{\partial_\tau}{\bm{\alpha}}\in\Omega^{p-1}(\mathbb{R}\times\Sigma)\,,\\
&\alpha:={\bm{\alpha}}-\mathrm{d}\tau\wedge\alpha_{\mathrm{t}}\in\Omega^p(\mathbb{R}\times\Sigma)\,,\\
\end{split}\label{decomposition}
\end{align}
leading to the decomposition ${\bm{\alpha}}=\alpha+\mathrm{d}\tau\wedge\alpha_{\mathrm{t}}$. Notice that one can also perform the former decomposition using the normal to the foliation $n$, as in \cite{parame_scalar,EM}, in which case the Hamiltonian turns out to be zero everywhere in phase space. However, in the present example, it is easier to break the objects with the fixed foliation using $\partial_\tau$. Besides, in this case, the comparison with the unparametrized version is straightforward \cite{nos}. Obviously, $\displaystyle \imath_{\partial_\tau}\alpha_{\mathrm{t}}=0$ and $\displaystyle \imath_{\partial_\tau}\alpha=0$. The basic objects used to find the Lagrangian corresponding to the action \eqref{action} are obtained with the help of the previous decomposition for ${\bm{e}}^I$ and ${\bm{\omega}}^I_{\phantom{I}J}$. We also need to find out how to perform a 3+1 decomposition of the dynamical diffeomorphisms $\Phi$ and the type of dynamical objects obtained by doing this. This is explained in appendix \ref{appendix_diffeos}.

The first result coming from the 3+1 decomposition is the characterization of the configuration space of the system. In the present case, it consists of the scalar fields $e_{\mathrm{t}}^{\,I}\,,\omega_{\mathrm{t}\phantom{I}J}^{\,I}\,,\Lambda\in C^\infty(\Sigma)$, the 1-forms $e^I\,,\omega^I_{\phantom{I}J}\in\Omega^1(\Sigma)$, and the $g$-spacelike embeddings $X:\Sigma\hookrightarrow\mathcal{M}$. The points in the tangent bundle of the configuration space of our system are denoted as ${\bm{\mathrm{v}}_q}$ (where $q=(e^I_{\mathrm{t}},e^I,\omega^{IJ}_{\mathrm{t}}, \omega^{IJ},\Lambda,X)$ denotes a point in $Q$) with components $(v_{e_\mathrm{t}}, v_e, v_{\omega_\mathrm{t}},v_\omega,v_\Lambda, v_X)$ that can be interpreted as velocities. We have $v_{e_\mathrm{t}}\,,v_{\omega_\mathrm{t}}\,,v_\Lambda\in C^\infty(\Sigma)$, $v_e\,,v_\omega\in\Omega^1(\Sigma)$, and $v_X\in\Gamma(X^*T\mathcal{M})$ (i.e. the velocity associated with the embedding $X$ is a vector field along the map $X$). In terms of these objects the Lagrangian can be written as
\begin{align}
L ({\bm{\mathrm{v}}_q})=\int_\Sigma&\Big[P^{IJ}_{\quad KL}\big(2\,e_{\mathrm{t}}^K e^L\!\wedge F_{IJ}+e^K\!\wedge e^L\!\wedge(v_{\omega IJ}-D\omega_{\mathrm{t}IJ})\big)\label{Lagrangian}\\
&\hspace*{2mm}+\varepsilon \Lambda n_X(v_X)\mathsf{vol}_{\gamma_X}-\frac{1}{3}\Lambda\epsilon_{IJKL}e_{\mathrm{t}}^I(e^J\wedge e^K \wedge e^L)\Big]\,.\nonumber
\end{align}
where the curvature is $F_I^{\,\,J}:=\mathrm{d}\omega_I^{\,\,J}+\omega_I^{\,\,K}\wedge\omega_{K}^{\,\,J}$, and the embedding-dependent objects $n_X$ and $\gamma_X$ are defined in appendix \ref{appendix_diffeos}. As we can see, the Lagrangian only depends on the $v_\omega$ and $v_X$ components of the velocity. Notice also that $L$ is linear in $v_\omega$ and $v_X$, a fact that it is not completely obvious \emph{a priori} as far as the dynamical diffeomorphisms are concerned.

%
%
%

\section{Hamiltonian approach}\label{sec_Hamiltonian}

\subsection{Momenta and Hamiltonian} The canonical momenta are obtained from the fiber derivative $FL:TQ\rightarrow T^*Q$ determined by the Lagrangian $L$. They are\allowdisplaybreaks
\begin{subequations}\label{momenta}
\begin{align}
{\bm{\mathrm{p}}_{e_{\mathrm{t}}}}(w_{e_{\mathrm{t}}}^I)&:=\langle FL({\bm{\mathrm{v}}_q}),(w_{e_{\mathrm{t}}}^I,0,0,0,0,0)\rangle&&=0\,,\\
{\bm{\mathrm{p}}_e}(w_e^I)&:=\langle FL({\bm{\mathrm{v}}_q}),(0,w_{e}^I,0,0,0,0)\rangle&&=0\,,\\
{\bm{\mathrm{p}}_{\omega_\mathrm{t}}}(w_{\omega_\mathrm{t}}^{IJ})&:=\langle FL({\bm{\mathrm{v}}_q}),(0,0,w_{\omega_\mathrm{t}}^{IJ},0,0,0)\rangle&&=0\,,\\
{\bm{\mathrm{p}}_\omega}(w_\omega^{IJ})&:=\langle FL({\bm{\mathrm{v}}_q}),(0,0,0,w_{\omega}^{IJ},0,0)\rangle&&=\int_\Sigma P_{IJKL}w_\omega^{IJ}\wedge e^K\wedge e^L\,,\\
{\bm{\mathrm{p}}_\Lambda}(w_\Lambda)&:=\langle FL({\bm{\mathrm{v}}_q}),(0,0,0,0,w_\Lambda,0)\rangle&&=0\,,\\
{\bm{\mathrm{p}}_X}(w_X)&:=\langle FL({\bm{\mathrm{v}}_q}),(0,0,0,0,0,w_X)\rangle&&=\int_\Sigma \varepsilon n_X(w_X)\Lambda \mathsf{vol}_{\gamma_X}\,,
\end{align}
\end{subequations}
where we denote $\langle\bm{\mathrm{p}},{\bm{\mathrm{w}}}\rangle:=\bm{\mathrm{p}}({\bm{\mathrm{w}}})$.

In the present case the fiber derivative $FL$ is obviously not onto. As it is not a diffeomorphism between $TQ$ and $T^*Q$, our action defines a singular system.

As we can see, the momenta are all independent of the velocities. The conditions \eqref{momenta} can be interpreted as \emph{primary constraints} that characterize the image of $FL$, usually known as the primary constraint submanifold $\mathfrak{M}_0$. As all the momenta can be written in terms of the configuration variables $e_{\mathrm{t}}^I$, $e^I$, $\omega_{\mathrm{t}}^{IJ}$, $\omega^{IJ}$, $\Lambda$, and $X$, the submanifold $\mathfrak{M}_0$ can be parametrized by these objects, in fact, $\mathfrak{M}_0$ \emph{is the configuration space} of the system. For this reason we will often refer to the configuration variables when talking about points in $\mathfrak{M}_0$.

On $\mathfrak{M}_0$ the Hamiltonian is defined by the condition $H\circ FL=E$, where the energy is $E:TQ\rightarrow \mathbb{R}:{\bm{\mathrm{v}}_q}\mapsto\langle FL({\bm{\mathrm{v}}_q}),{\bm{\mathrm{v}}_q}\rangle-L({\bm{\mathrm{v}}_q})$. In the present case we have
\begin{equation}\label{Hamiltonian}
H=\int_\Sigma\left(P_{IJKL}\Big(e^I\!\wedge e^J\!\wedge D\omega_{\mathrm{t}}^{KL}-2\,e_{\mathrm{t}}^I e^J\!\wedge F^{KL}\Big)+\frac{1}{3}\Lambda\epsilon_{IJKL}\, e_{\mathrm{t}}^I(e^J\!\wedge e^K\!\wedge e^L)\right)\,.
\end{equation}
Notice that $H$ does not depend either on the momenta (it is defined only on $\mathfrak{M}_0$) or the embeddings $X$.

\subsection{GNH analysis}

In order to get a Hamiltonian description for the dynamics of a singular Lagrangian system, it is necessary to identify a maximal submanifold $\mathfrak{M}$ of the primary constraint submanifold $\mathfrak{M}_0$ and vector fields $\mathbb{Z}\in\mathfrak{X}(\mathfrak{M}_0)$ tangent to $\mathfrak{M}$ such that the following condition holds
\begin{equation}\label{eq_ham_vfields}
(\imath_{\mathbb{Z}}\omega-\dd H)|\mathfrak{M}=0\,.
\end{equation}
In the previous expression $\omega$ denotes the pullback of the canonical symplectic form $\Omega$ in $T^*Q$ to $\mathfrak{M}_0$ and $\dd$ denotes the differential in phase space. A very good way to solve \eqref{eq_ham_vfields} is to follow the procedure introduced by Gotay, Nester, and Hinds in \cite{GNH1,GNH2,GNH3}.

Vector fields in the phase space discussed here have the form
\begin{equation}\label{vector_field}
\mathbb{Z}=\big(Z_{e_{\mathrm{t}}},Z_{e},Z_{\omega_{\mathrm{t}}},Z_\omega,Z_\Lambda,Z_X; {\bm{\mathrm{Z}}}_{e_{\mathrm{t}}},{\bm{\mathrm{Z}}}_e,{\bm{\mathrm{Z}}}_{\omega_{\mathrm{t}}},{\bm{\mathrm{Z}}}_\omega,{\bm{\mathrm{Z}}}_\Lambda,{\bm{\mathrm{Z}}}_X \big) \,,
\end{equation}
where the boldface components are associated with the ``momenta directions'' in phase space and the components in the ``field directions'' have the following internal index structure
\[
\big(Z_{e_{\mathrm{t}}}^I,Z_{e}^I,Z_{\omega_{\mathrm{t}}}^{IJ},Z_\omega^{IJ},Z_\Lambda,Z_X\big)\,.
\]
Given $\mathbb{Z},\mathbb{Y}\in\mathfrak{X}(T^*Q)$ and the canonical symplectic form $\Omega$ we have
\begin{align}
\Omega(\mathbb{Z},\mathbb{Y})=&{\bm{\mathrm{Y}}}_{e_{\mathrm{t}}}(Z_{e_{\mathrm{t}}})-{\bm{\mathrm{Z}}}_{e_{\mathrm{t}}}(Y_{e_{\mathrm{t}}})+
                               {\bm{\mathrm{Y}}}_e (Z_e)-{\bm{\mathrm{Z}}}_e (Y_e)+
                               {\bm{\mathrm{Y}}}_{\omega_{\mathrm{t}}} (Z_{\omega_{\mathrm{t}}})-{\bm{\mathrm{Z}}}_{\omega_{\mathrm{t}}} (Y_{\omega_{\mathrm{t}}})\nonumber\\
                              +&{\bm{\mathrm{Y}}}_\omega (Z_\omega)-{\bm{\mathrm{Z}}}_\omega (Y_\omega)+
                               {\bm{\mathrm{Y}}}_\Lambda (Z_\Lambda)-{\bm{\mathrm{Z}}}_\Lambda (Y_\Lambda)+
                               {\bm{\mathrm{Y}}}_X (Z_X)-{\bm{\mathrm{Z}}}_X (Y_X)\label{symplectic_form} \,.
\end{align}
Remember that $\imath_{\mathbb{Z}}\Omega(\mathbb{Y})=\Omega(\mathbb{Z},\mathbb{Y})$. From here on we will work on $\mathfrak{M}_0$. The pullback of $\Omega$ to the primary constraint submanifold $\mathfrak{M}_0$ is given by (see appendix \ref{Ap_GNH_pullback})
\begin{align}
&\omega(\mathbb{Z},\mathbb{Y})=\label{omega_pullback}\\
&\int_\Sigma 2P_{IJKL}\big(Z_\omega^{IJ}\wedge Y_e^K-Y_\omega^{IJ}\wedge Z_e^K\big)\wedge e^L+\big(Z_{\mathsf{X}}^\perp(Y_\Lambda-\imath_{Y_{\mathsf{X}}^\top}\mathrm{d}\Lambda)
-Y_{\mathsf{X}}^\perp(Z_\Lambda-\imath_{Z_{\mathsf{X}}^\top}\mathrm{d}\Lambda)\big)\mathsf{vol}_{\gamma_X}\nonumber \,.
\end{align}
On the other hand
\begin{align}
\dd H(\mathbb{Y})=&\int_\Sigma \Big[Y_\omega^{IJ}\wedge\big(-2D(P_{IJKL}e_{\mathrm{t}}^{\,K} e^L)+2P_{IKLM}\omega_{\mathrm{t}J}^{\phantom{\mathrm{t}J}K}e^L\wedge e^M\big)\label{dH}\\
&\hspace*{7mm}+Y_e^I \wedge \big( 2P_{IJKL}\big(e^J\wedge D\omega_{\mathrm{t}}^{\phantom{\mathrm{t}}KL}+e_{\mathrm{t}}^{\,J} F^{KL}\big)-\Lambda \epsilon_{IJKL}e_{\mathrm{t}}^{\,J} e^K\wedge e^L \big)\nonumber\\
&\hspace*{7mm}-Y_{\omega_{\mathrm{t}}}^{IJ}D\big(P_{IJKL}e^K\wedge e^L\big)+\frac{1}{3}Y_\Lambda \epsilon_{IJKL}e_{\mathrm{t}}^{\,I} (e^J\wedge e^K\wedge e^L)\nonumber\\
&\hspace*{7mm}+Y_{e_{\mathrm{t}}}^I\big(-2P_{IJKL}e^J\wedge F^{KL}+\frac{1}{3}\Lambda \epsilon_{IJKL}(e^J\wedge e^K\wedge e^L)\big)\Big]\nonumber \,.
\end{align}
Requiring now that $\omega(\mathbb{Z},\mathbb{Y})=\dd H(\mathbb{Y})$ for all vector fields $\mathbb{Y}\in\mathfrak{X}(\mathfrak{M}_0)$ we get the following:

\medskip

\noindent 1) Conditions involving the components of $\mathbb{Z}$.
\begin{subequations}
\begin{align}
&Z_{\mathsf X}^\perp\,\mathsf{vol}_{\gamma_X}=\frac{1}{3}\epsilon_{IJKL}e_{\mathrm{t}}^I(e^J\wedge e^K \wedge e^L)\,,\label{condition1}\\
&Z_\Lambda=\imath_{Z_{\mathsf X}^\top}\mathrm{d}\Lambda\,,\label{condition2}\\
&Z_{\mathsf X}^\perp \mathrm{d}\Lambda=0\,,\label{condition3}\\
&Z_e^{[I}\wedge e_{\phantom{e}}^{J]}=D(e_{\mathrm{t}}^{[I}e_{\phantom{e}}^{J]})-[P^{-1}]^{IJ}_{\,\,\,\,\,\,\,MN}P^{MK}_{\quad\,\,\,\,LP}\,\omega^{\,N}_{{\mathrm{t}}\,\,\,K}\,e^L\wedge e^P\,,\label{condition4}\\
&2P_{IJKL}e^J\wedge Z_{\omega}^{\,KL}=2P_{IJKL}\big(e^J\wedge D\omega_{\mathrm{t}}^{KL}+e_{\mathrm{t}}^{\,J}F^{KL}\big)-\Lambda\,\epsilon_{IJKL}e_{\mathrm{t}}^{\,J} e^K\wedge e^L \,. \label{condition5}
\end{align}
\end{subequations}
\noindent 2) Secondary constraints
\begin{subequations}
\begin{align}
& e^{[I}\!\wedge D e^{J]}=0\,,\label{constraint1}\\
& P_{IJKL}e^J\wedge F^{KL}-\frac{1}{3!}\Lambda\,\epsilon_{IJKL}e^{J}\wedge e^K\wedge e^L=0\,.\label{constraint2}
\end{align}
\end{subequations}
Before we analyze in detail the conditions on $\mathbb{Z}$ and the secondary constraints we make a couple of comments. First, in order to find the components of the Hamiltonian vector field $\mathbb{Z}$ we have to solve equations \eqref{condition1}-\eqref{condition5}. Their solutions will give us the components of $\mathbb{Z}$ in terms of the configuration variables. As the equations are inhomogeneous, they may not be solvable in the whole of $\mathfrak{M}_0$. If this is the case, new secondary constraints will arise that we will duly have to take into account. It may also happen that some of the components of $\mathbb{Z}$ are left arbitrary (a feature characteristic of gauge theories). Indeed, it is straightforward to see that this is the case since there are no conditions involving $Z^I_{e_{\mathrm{t}}}$ or $Z_{\omega_{\mathrm{t}}}^{IJ}$. Second, as we are only interested in the values of $\mathbb{Z}$ on the final constraint submanifold $\mathfrak{M}$, we can take advantage of the constraints to simplify the expressions for the components of $\mathbb{Z}$. This is specially useful when checking the tangency of $\mathbb{Z}$ to $\mathfrak{M}$.

\subsubsection{Conditions on the components of $\mathbb{Z}$}\label{subsub_Z}

The conditions \eqref{condition1}-\eqref{condition3} are easy to analyze. To begin with, it is important to point out that diffeomorphisms must be interpreted as curves of embeddings in this setting (see appendix \ref{appendix_diffeos}), hence, we must demand that $Z_{\mathsf X}^\perp\neq 0$ at every point in $\Sigma$. This can be easily achieved by restricting the configuration variables to satisfy
\begin{equation}\label{condition_ev}
\epsilon_{IJKL}e_{\mathrm{t}}^{\,I}(e^J\wedge e^K \wedge e^L)\neq0\,,
\end{equation}
everywhere on $\Sigma$ because then, \eqref{condition1} implies $Z_{\mathsf X}^\perp\neq 0$ for all $p\in\Sigma$. Notice that, roughly speaking, \eqref{condition_ev} defines an open subset of our initial configuration space. A further consequence of \eqref{condition_ev} is that \eqref{condition3} is then equivalent to $\mathrm{d}\Lambda=0$ --a secondary constraint-- which means that the scalar field $\Lambda$ has the same value on all the points of $\Sigma$ although, in principle, it could depend on the evolution parameter. Notice, however, that condition \eqref{condition2} implies $Z_\Lambda=0$ and, hence,  $\dot{\Lambda}=Z_\Lambda=0$, i.e. $\Lambda$ must be also \emph{constant under evolution}. We then conclude that $\Lambda$ is constant in $\mathcal{M}$ in agreement with the result obtained from the field equations. In fact, this result is also a consequence of the dynamics of the theory and can be found without invoking the parametrization (in analogy with \eqref{eq: consecuencia Lambda=cte}). As we can see, the parametrized unimodularity condition can be easily implemented in the connection-triad formalism discussed here and leads to conclusions similar to those of \cite{Kuchar-1}. In summary, if we restrict our configuration variables to satisfy \eqref{condition_ev}, the conditions \eqref{condition1}-\eqref{condition3} are equivalent to
\begin{subequations}
\begin{align}
&Z_X^\perp=\frac{1}{3}\left(\frac{\epsilon_{IJKL}e_{\mathrm{t}}^{\,I}(e^J\wedge e^K \wedge e^L)}{\mathsf{vol}_{\gamma_X}}\right)\,,\label{condition1-bis}\\
&\mathrm{d}\Lambda=0\,,\label{condition2-bis}\\
&Z_\Lambda=0\,,\label{condition3-bis}
\end{align}
\end{subequations}
where we have used the notation introduced in appendix \ref{appendix_notation}.

Let us discuss now conditions \eqref{condition4} and \eqref{condition5}. As we can see, both of them can be interpreted as linear inhomogeneous equations for the 1-forms $Z_e^I$ and $Z_\omega^{IJ}$, respectively. The fact that they are inhomogeneous means that they may not be solvable in all of $\mathfrak{M}_0$. In other words, additional secondary constraints may appear.

By using identity \eqref{identity1} of appendix \ref{appendix_useful_results}, condition \eqref{condition4} can be written in the following form, independent of $\gamma$ and $\varepsilon$:
\begin{equation}\label{condition4-bis}
{Z^{[I}}_{\!\!\!\!e}\wedge e^{J]}=D({e^{\,\,[I}}_{\!\!\!\!\!\!\!\!\!\mathrm{t}}\,\,\,\,e^{J]})-{\omega_{\mathrm{t}K}}^{[I}e^{J]}\!\wedge e^K\,.
\end{equation}
These are six equations for the four 1-forms $Z_e^I$, ($I=0,\ldots,3$) or, counting components, 18 equations for 12 unknowns. It is convenient to write them in the form
\begin{equation}\label{equation_xi}
\Xi^{[I}\wedge e^{J]}={e^{\,\,[I}}_{\!\!\!\!\!\!\!\!\!\mathrm{t}}\,\,\,\,De^{J]}\,,
\end{equation}
with
\[
\Xi^{I}:=Z_e^I-De_{\mathrm{t}}^I-e^K\omega_{\mathrm{t}K}^{\,\,\,\,\,\,\,\,I}\,.
\]
In order to solve these equations it helps to split them in two groups: one corresponding to $I=i$ and $J=j$ with $i\,,j=1,\ldots,3$ and the other to $I=0$ and $J=j$ with $j=1,\ldots,3$. We thus get
\begin{subequations}
\begin{align}
&\epsilon_{ijk}e^j\wedge \Xi^k=\epsilon_{ijk}A^{jk}\,,\label{ec_Xi}\\
&\Xi^0\wedge e^j-\Xi^j\wedge e^0={A^j}\,,\label{ec_Xi0}
\end{align}
\end{subequations}
with
\begin{subequations}
\begin{align}
&A^{ij}:={e_{\mathrm{t}}}^{[i}De^{j]}\,\label{As_1} \,,\\
&A^j:={e_{\mathrm{t}}}^0 De^j-{e_{\mathrm{t}}}^j De^0\,.\label{As_2}
\end{align}
\end{subequations}
It is important to notice that in the previous expressions $De^i$ means
\[
De^i=\mathrm{d}e^i+\omega^i_{\,\,J}\wedge e^J=\mathrm{d}e^i+\omega^i_{\,\,j}\wedge e^j+\omega^i_{\,\,0}\wedge e^0 \,,
\]
(analogously, for other objects of this type).

Equation \eqref{ec_Xi} can always be solved for $\Xi^k$ without having to impose any conditions on the inhomogeneous term (see appendix \ref{appendix_useful_results}). However, equation \eqref{ec_Xi0} can only be solved for $\Xi^0$ when the following condition holds [here $\Xi^i$ is the solution to \eqref{ec_Xi}]
\begin{equation}\label{Newconstraint_1}
\Xi^i\wedge e^0\wedge e^j+A^i\wedge e^j+\Xi^j\wedge e^0\wedge e^i+A^j\wedge e^i=0\,,
\end{equation}
(see appendix \ref{appendix_useful_results}). This gives the new secondary constraints
\begin{equation}\label{secondary_1}
\epsilon_{klm}e_{\mathrm{t}}^k\big(\D^{il}\E^{jm}+\D^{jl}\E^{im}\big)-e_{\mathrm{t}}^0\big(\D^{ij}+\D^{ji}\big)+e_{\mathrm{t}}^i \D^{\circ j}+e_{\mathrm{t}}^j \D^{\circ i}=0 \,,
\end{equation}
in terms of the objects introduced in appendix \ref{appendix_notation}.

Let us discuss now how to solve for $Z_\omega^{IJ}$ in \eqref{condition5} and, in particular, whether new constraints arise in the process. In terms of components we have now 12 equations and 18 unknowns. By defining
\[
T_{IJ}:=P_{IJKL}\big(-Z_\omega^{KL}+D\omega_{\mathrm{t}}^{KL}\big)
\]
the equations \eqref{condition5} become
\begin{equation}\label{condition5-bis}
e^J\wedge T_{IJ}=-P_{IJKL}e_{\mathrm{t}}^J F^{KL}+\frac{1}{2}\Lambda \epsilon_{IJKL} e_{\mathrm{t}}^J e^K\wedge e^L\,.
\end{equation}
In the following we will denote the r.h.s. of the previous equation as $C_I$. In components, equations \eqref{condition5-bis} are
\begin{subequations}
\begin{align}
&e^i\wedge T_{0i}=C_0\,,\label{ec_T0i}\\
&e^j\wedge T_{ij}=C_i+e^0\wedge T_{0i}\,,\label{ec_Tij}
\end{align}
\end{subequations}
with
\begin{align}
&C_0=-\frac{1}{2}\epsilon_{ijk}e_{\mathrm{t}}^i F^{jk}-\frac{\varepsilon}{\gamma}e_{\mathrm{t}}^i F_{0i}+\frac{1}{2}\Lambda\epsilon_{ijk}e_{\mathrm{t}}^i e^j\wedge e^k\,,\label{Cs}\\
&C_i=\frac{1}{2}\epsilon_{ijk}e_{\mathrm{t}}^0 F^{jk}+\varepsilon\cdot\epsilon_{ijk}e_{\mathrm{t}}^jF^{k}_{\,\,\,\,0}+\frac{\varepsilon}{\gamma}e_{\mathrm{t}}^0 F_{0i}+\frac{\varepsilon}{\gamma}e_{\mathrm{t}}^jF_{ji}+\Lambda\epsilon_{ijk}e_{\mathrm{t}}^j\,e^0\wedge e^k-\frac{1}{2}\Lambda\epsilon_{ijk}e_{\mathrm{t}}^0\,e^j\wedge e^k\,.\nonumber
\end{align}
As shown in appendix \ref{appendix_useful_results}, equations \eqref{ec_T0i} and \eqref{ec_Tij} can always be solved with no conditions coming from their inhomogeneous terms, so no new secondary constraints appear here. Notice that to solve the latter we have to dualize and consider $T_{ij}=\epsilon_{ijk}T^k$.

\subsubsection{Simplifying the constraints}\label{subsub_simplif_constraints}
\noindent Before solving the equations for the components of $\mathbb{Z}$ and checking the tangency of $\mathbb{Z}$ to the submanifold of $\mathfrak{M}_0$ determined by the secondary constraints, it is useful to look at these constraints in detail and simplify them as much as possible.

The constraints \eqref{constraint1} can be split in the following set of conditions
\begin{subequations}
\begin{align}
&\epsilon_{ijk}e^j\wedge De^k=0\,,\label{constraint1_ij}\\
&De^0\wedge e^i-De^i\wedge e^0=0\,.\label{constraint1_i0}
\end{align}
\end{subequations}
In terms of the objects introduced in appendix \ref{appendix_notation}, these are equivalent to
\begin{subequations}
\begin{align}
&\D^{ij}-\D^{ji}=0\,,\label{constraint1_ij_bis}\\
&\D^{\circ i}-\D^{i\circ}=0\,.\label{constraint1_i0_bis}
\end{align}
\end{subequations}

It is very important to point out that, when \eqref{constraint1_ij_bis} and \eqref{constraint1_i0_bis} hold, the secondary constraints \eqref{secondary_1} are equivalent to (see appendix \ref{Ap_GNH_rewritingnew})
\begin{equation}\label{secondary_2}
\left(e_{\mathrm{t}}^0-\frac{1}{2}\epsilon_{klm}e_{\mathrm{t}}^k\E^{lm}\right)\D^{(ij)}=0 \,.
\end{equation}
As discussed in appendix \ref{Ap_non_deg}, the non-degeneracy condition for the tetrads \eqref{condition_ev} implies that the term in parentheses in \eqref{secondary_2} is different from zero at every point of $\Sigma$, hence, the new secondary constraints can be written in the pleasingly concise form
\begin{equation}\label{secondary_4}
\D^{(ij)}=0\,.
\end{equation}
We prove now an important result:
\begin{prop}
The constraints $\D^{(ij)}=0$ and $e^{[I}\wedge De^{J]}=0$ are equivalent to the condition $De^I=0$.
\end{prop}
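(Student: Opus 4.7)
The implication $De^I=0 \Rightarrow \D^{(ij)}=0 \text{ and } e^{[I}\wedge De^{J]}=0$ is immediate: if $De^I$ vanishes as a 2-form, then each of its components in the basis of 2-forms on $\Sigma$ introduced in appendix \ref{appendix_notation} must vanish, in particular $\D^{(ij)}=0$, and the exterior product $e^{[I}\wedge De^{J]}$ is trivially zero. The substance of the statement lies in the converse direction.

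For the converse, I would start by exploiting the decomposition of $e^{[I}\wedge De^{J]}=0$ already carried out in the text. Taking $(I,J)=(i,j)$ with spatial indices gives \eqref{constraint1_ij}, equivalent to the antisymmetry condition $\D^{[ij]}=0$ of \eqref{constraint1_ij_bis}. Taking $(I,J)=(i,0)$ gives \eqref{constraint1_i0}, equivalent to $\D^{\circ i}=\D^{i\circ}$ of \eqref{constraint1_i0_bis}. Combining $\D^{[ij]}=0$ with the hypothesized $\D^{(ij)}=0$ then yields $\D^{ij}=0$ for every pair of spatial indices. Thus the combined content of both constraints reduces, at this stage, to the two statements $\D^{ij}=0$ and $\D^{\circ i}=\D^{i\circ}$.

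The final and crucial step is to promote these relations to the full identity $De^I=0$. The 2-forms $De^I$ on the 3-manifold $\Sigma$ possess $4\times 3=12$ independent components, exactly matched by the 6 equations in $\D^{ij}=0$ together with the 6 in $e^{[I}\wedge De^{J]}=0$; the count therefore suggests that no further information is needed. To actually close the argument, I would expand $De^0$ and $De^i$ in a basis of 2-forms on $\Sigma$ constructed from the spatial triad $\{e^i\}$ and the coefficients $\E^{ij}$ of appendix \ref{appendix_notation} (which encode the fact that $e^0$ is linearly dependent on $e^1,e^2,e^3$ by non-degeneracy \eqref{condition_ev}). Writing the objects $\D^{IJ}$ explicitly in terms of these coefficients and of the components of $De^I$ should show that $\D^{ij}=0$ completely fixes $De^i$, while the remaining relation $\D^{\circ i}=\D^{i\circ}$, combined with the same linear dependence of the $e^I$'s on $\Sigma$, forces $De^0$ to vanish as well.

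The routine part consists of the $(\Leftarrow)$ direction and the decomposition of $e^{[I}\wedge De^{J]}=0$ already displayed in \eqref{constraint1_ij_bis}--\eqref{constraint1_i0_bis}. The main obstacle I anticipate is the explicit algebraic manipulation of the last step, where one must use the non-degeneracy \eqref{condition_ev} (essentially the non-vanishing of $e_{\mathrm{t}}^0-\tfrac{1}{2}\epsilon_{klm}e_{\mathrm{t}}^k\E^{lm}$ invoked just above the proposition) together with the invertibility properties of the spatial triad to rule out non-trivial solutions of the residual system. I would expect this to amount to a short linear algebra argument once the definitions of $\D^{IJ}$ and $\E^{ij}$ from appendix \ref{appendix_notation} are written out in full.
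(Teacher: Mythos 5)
Your proposal is correct and follows essentially the same route as the paper: reduce the hypotheses to $\D^{ij}=0$ and $\D^{\circ i}=\D^{i\circ}$, then expand $De^i$ and $De^0$ in the basis of 2-forms built from the spatial triad (the paper's identities \eqref{rel_7} and \eqref{rel_8}) to conclude $De^i=0$, hence $\D^{i\circ}=0$, hence $\D^{\circ i}=0$, hence $De^0=0$. The only inessential point is your anticipated reliance on \eqref{condition_ev}: the final step needs only the non-degeneracy of the spatial triad (i.e., that $w$ is a volume form), not the $e_{\mathrm{t}}$-dependent condition, which is used earlier to reduce \eqref{secondary_1} to $\D^{(ij)}=0$ but plays no role in the proposition itself.
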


\begin{proof}

\noindent $\displaystyle \boxed \Leftarrow$ This is obvious as $De^i=0$ implies $\displaystyle\D^{ij}:=\left(\frac{e^i\wedge De^j}{w}\right)=0$.

\medskip

\noindent $\displaystyle \boxed \Rightarrow$ To begin with, notice that, as a consequence of \eqref{constraint1_ij_bis} and \eqref{constraint1_i0_bis}, the conditions $\D^{(ij)}=0$ together with $e^{[I}\wedge De^{J]}=0$ are equivalent to $\D^{ij}=0$ and $\D^{i\circ}-\D^{\circ i}=0$. Now, according to \eqref{rel_7}, $\D^{ij}=0$ implies $De^i=0$. From this and the definition of $\D^{i\circ}$, we immediately obtain $\D^{i\circ}=0$ so that \eqref{constraint1_i0_bis} implies $\D^{\circ i}=0$ and, hence, $De^0=0$ as a consequence of \eqref{rel_8}.
\end{proof}

\medskip

\noindent As a trivial --but nonetheless reassuring-- check, notice that \eqref{constraint1} and \eqref{secondary_4} provide 12 conditions ``per point'', the same as $De^I=0$. Notice also that we have shown that $\D^{\circ i}=\D^{i\circ}=0$ on the secondary constraint submanifold.

The condition $De^I=0$ can be used to simplify the constraints \eqref{constraint2}. To this end, notice that $De^I=0$ implies $F^I_{\phantom{I}J}\wedge e^J=0$ and, hence, the $\gamma$-dependent terms in \eqref{constraint2} vanish. Summarizing, we have shown that the secondary constraints found up to this point --ultimately the whole set of constraints-- can be written as
\begin{subequations}\label{constraints_simple}
\begin{align}
&\mathrm{d}\Lambda=0\,,\label{constraints_simple1}\\
&De^I=0\,,\label{constraints_simple2}\\
&\epsilon_{IJKL}e^J\wedge\left(F^{KL}-\frac{1}{3}\Lambda e^K\wedge e^L\right)=0\,.\label{constraints_simple3}
\end{align}
\end{subequations}

In terms of the objects introduced in appendix \ref{appendix_notation} they can be written as
\begin{subequations}\label{constraints_sf}
\begin{align}
&\mathrm{d}\Lambda=0\,,\\
&\D^{ij}=0\,,\\
&\D^{\circ i}=\D^{i\circ}=0\,,\\
&2\Lambda-\epsilon_{ijk}\F^{ijk}=0\,,\\
&\epsilon_{ijk}(\Lambda\E^{jk}+2\F^{j\circ k}-\F^{\circ jk})=0\,.
\end{align}
\end{subequations}

\subsubsection{Tangency conditions}\label{subsub_tangency}

\noindent Up to this point, we have restricted ourselves to study the secondary constraints coming either from \eqref{eq_ham_vfields} or as conditions for the solvability of the equations for the components of $\mathbb{Z}$. There is, though, an additional consistency requirement which is central to the GNH approach: we must ensure the tangency of $\mathbb{Z}$ to the submanifold of $\mathfrak{M}_0$ defined by the secondary constraints. The tangency conditions can be easily obtained by computing the directional derivatives of the constraints in the direction of $\mathbb{Z}$. As a side remark, it is important to notice that this step \emph{does not involve} the presymplectic form $\omega$ (in other words, Poisson brackets play no role here). The tangency conditions take the form of additional linear and homogeneous equations involving the components of $\mathbb{Z}$ and the variables $e_{\mathrm{t}}^I$, $\omega^I_{\mathrm{t}\,\,J}$, $e^I$, $\omega^I_{\phantom{I}J}$, $\Lambda$ and $X$. In the present case it is useful to derive them from \eqref{constraints_simple}. They are
\begin{subequations}\label{tan}
\begin{align}
&\mathrm{d}Z_\Lambda=0 \,, \label{tan_1}\\
&DZ_e^I+Z_{\omega J}^I\wedge e^J=0 \,, \label{tan_2}\\
&\epsilon_{IJKL}\left(Z_e^J\wedge (F^{KL}-\Lambda e^K\wedge e^L)+e^J\wedge DZ_\omega^{KL}-\frac{1}{3}Z_\Lambda e^J\wedge e^K\wedge e^L\right)=0\,.\label{tan_3}
\end{align}
\end{subequations}

These must be considered together with \eqref{condition5}, \eqref{condition3-bis}, \eqref{condition4-bis}, and taking into account that the constraints \eqref{constraints_simple} must hold.

\subsubsection{Final consistency analysis}\label{subsub_Final_consistency}

So far, we have found a set of constraints \eqref{constraints_simple}, defining a submanifold $\mathfrak{M}$ of $\mathfrak{M}_0$ where the dynamical variables are forced to live, and the conditions \eqref{condition5}, \eqref{condition3-bis}, \eqref{condition4-bis}, and \eqref{tan} that the components of the restriction of the Hamiltonian vector fields to $\mathfrak{M}$ must satisfy. The latter have different origins: some of them come directly from the resolution of  $\omega(\mathbb{Z},\mathbb{Y})=\dd H(\mathbb{Y})$ on $\mathfrak{M}_0$ whereas the rest appear as tangency conditions. Notice that, despite their different origins, we have to consider all these equations together in order to get the final form of the components of the Hamiltonian vector field that defines the dynamics of our model.

There are several possibilities now:
\begin{itemize}
\item There are no solutions for the components of $\mathbb{Z}$ so the theory is inconsistent (obviously not the case here).
\item The equations can be solved with no extra conditions on the configuration variables. Their solutions then give the components of the Hamiltonian vector field that encodes the dynamics on the final constraint submanifold.
\item New consistency conditions appear. These should be added to the secondary constraints together with the corresponding tangency conditions to start the process again.
\end{itemize}

Let us find out what happens in the present case. To begin with, we immediately see that \eqref{condition3-bis} implies \eqref{tan_1} and we can remove $Z_\Lambda$ from the remaining equations. Next, it is convenient to solve \eqref{condition4-bis} for $Z_e^I$ or, equivalently, \eqref{equation_xi}. When the constraints hold these equations can be written as
\begin{subequations}
\begin{align}
&\epsilon_{ijk}e^j\wedge \Xi^k=0\,,\label{ec_Xi_const}\\
&\Xi^0\wedge e^j-\Xi^j\wedge e^0=0\,.\label{ec_Xi0_const}
\end{align}
\end{subequations}

Now, by using the results of appendix \ref{appendix_useful_results} we immediately see that the (unique) solution of \eqref{ec_Xi_const} and \eqref{ec_Xi0_const} is $\Xi^I=0$, which tells us that
\begin{equation}\label{Ze_const}
Z_e^I=De_{\mathrm{t}}^I-\omega^I_{\mathrm{t}\,J}e^J\,.
\end{equation}
We can use now \eqref{Ze_const} to simplify \eqref{condition5}, \eqref{tan_2}, and \eqref{tan_3}. To this end we $D$-differentiate \eqref{Ze_const} and use the constraints to get
\[
DZ_e^I=F^I_{\phantom{I}J}e_{\mathrm{t}}^J-D\omega^I_{\mathrm{t}\,J}\wedge e^J\,,
\]
which, plugged into the tangency condition \eqref{tan_2} gives
\begin{equation}\label{tan_2_bis}
F^I_{\phantom{I}J}e_{\mathrm{t}}^J+(Z^I_{\omega J}-D\omega^I_{\mathrm{t}\,J})\wedge e^J=0\,.
\end{equation}
When this condition holds \eqref{condition5} is equivalent to
\begin{equation}\label{condition5_bis}
\epsilon_{IJKL}\Big(e^J\wedge\big(Z_\omega^{KL}-D\omega_{\mathrm{t}}^{KL}\big)-e_{\mathrm{t}}^J\big(F^{KL}-\Lambda e^K\wedge e^L\big)\Big)=0\,.
\end{equation}
Remember that we also have the tangency condition
\begin{equation}\label{tan3_bis}
\epsilon_{IJKL}\left(Z_e^J\wedge (F^{KL}-\Lambda e^K\wedge e^L)+e^J\wedge DZ_\omega^{KL}\right)=0\,.
\end{equation}
At this point the only task left is to solve \eqref{tan_2_bis}, \eqref{condition5_bis} and \eqref{tan3_bis} for $Z_\omega^{IJ}$. In order to do this we first show that by $D$-differentiating \eqref{condition5_bis}, using \eqref{Ze_const} and the secondary constraints, equation \eqref{tan3_bis} holds. This is a direct computation that we give in some detail in appendix \ref{Ap_GNH_Zomega}. Now, in order to find the $Z_\omega^{IJ}$ we only have to consider \eqref{tan_2_bis} and \eqref{condition5_bis}. 
An important point that we have to address in the first place is the consistency of these equations: can they always be solved or should we introduce extra secondary constraints in order to guarantee their solvability?

As we have shown above, the constraints $De^I=0$ are equivalent to $e^{(i}\wedge De^{j)}=0$ and $e^{[I}\wedge De^{J]}=0$, hence, the tangency conditions of the Hamiltonian vector field to the secondary constraint submanifold can also be written in the form
\begin{subequations}\label{TanDe_comp}
\begin{align}
&e^{(i}\wedge D Z^{j)}_e+e^{(i}\wedge Z^{j)}_{\omega\,\,K}\wedge e^K=0\,,\label{TanDe_comp1}\\
&e^{[I}\wedge D Z^{J]}_e+e^{[I}\wedge Z^{J]}_{\omega\,\,K}\wedge e^K=0\,,\label{TanDe_comp2}
\end{align}
\end{subequations}
which, on account of \eqref{Ze_const}, become
\begin{subequations}\label{TanDe_compS}
\begin{align}
&{e^{(i}\wedge F^{j)}}_{\!K} e_{\mathrm{t}}^K+e^{(i}\wedge S^{j)}_{\phantom{j}\,K}\wedge e^K=0\,,\label{TanDe_compS1}\\
&e^{[I}\wedge F^{J]}_{\phantom{j}\,\,\,K}e_{\mathrm{t}}^K+e^{[I}\wedge S^{J]}_{\phantom{j}\,\,\,K}\wedge e^K=0\,,\label{TanDe_compS2}
\end{align}
\end{subequations}
where we have introduced the notation $S^I_{\phantom{I}J}:=Z^I_{\omega J}-D\omega^I_{\mathrm{t}J}$. By defining $F_\Lambda^{IJ}:=F^{IJ}-\Lambda e^I\wedge e^J$, condition \eqref{condition5_bis} can be written as
\[
e^{[J}\wedge S^{KL]}-e_{\mathrm{t}}^{[J}F_\Lambda^{KL]}=0\,,
\]
which implies
\begin{equation}\label{55mod}
e_J\wedge \big(e^{[J}\wedge S^{KL]}-e_{\mathrm{t}}^{[J}F_\Lambda^{KL]}\big)=0\,.
\end{equation}
Taking into account that $e_J\wedge F_\Lambda^{JK}:=e_J\wedge(F^{JK}-\Lambda e^J\wedge e^K)=0$ --remember that the constraint $De^I=0$ implies $F^I_{\phantom{J}J}\wedge e^J=0$-- \eqref{55mod} is simply
\[
2e^{[K}\wedge S^{L]}_{\phantom{L]}J}\wedge e^J-e_{\mathrm{t}J}e^J\wedge F^{KL}+\Lambda e_{\mathrm{t}J}e^J\wedge e^K\wedge e^L=0\,,
\]
which, as a consequence of the constraint \eqref{constraints_simple3}, can be immediately seen to be equivalent to \eqref{TanDe_compS2}. We then conclude that the problem of finding the components of $Z^I_{\omega J}$ on the secondary constraint submanifold reduces to that of solving \eqref{condition5_bis} together with \eqref{TanDe_compS1}.

\medskip

\noindent Let us look now at equation \eqref{condition5_bis}. By separately considering $I=0$ and $I=i$, it can be split into
\begin{subequations}\label{Eqs55SM}
\begin{align}
&e^i\wedge S_i=\frac{1}{2}\epsilon_{ijk}e_{\mathrm{t}}^iF_\Lambda^{jk}\,,\label{Eqs55SM1}\\
&\epsilon_{ijk}e^j\wedge S^{0k}=e^0\wedge S_i+\epsilon_{ijk}e_{\mathrm{t}}^jF_\Lambda^{0k}-\frac{1}{2}\epsilon_{ijk}e_{\mathrm{t}}^0F_\Lambda^{jk}\,,\label{Eqs55SM2}
\end{align}
\end{subequations}
where $2S^i:=\epsilon^{ijk}S_{jk}$. As shown in appendix \ref{appendix_useful_results}, equation \eqref{Eqs55SM1} can always be solved and the solution written in the form
\begin{equation}\label{Solution_Si}
S_i=\tau_{ij}e^j+\sigma_i \,,
\end{equation}
where $\tau_{ij}\in C^\infty(\Sigma)$ with $\tau_{ij}=\tau_{ji}$ but, otherwise arbitrary, and $\sigma_i$ a concrete function of the dynamical fields which can be computed by using equation \eqref{sols3}. Plugging $S_i$ into \eqref{Eqs55SM2} and using \eqref{sols1} we find
\begin{equation}\label{Solution_Soi}
S^{0k}=\tau_{ij}\E^{kj}e^i+\eta^k\,,
\end{equation}
where the concrete form of $\eta^k$ is not specially illuminating so we do not give it here [of course, it can be obtained by using \eqref{sols1}]. As we can see, equation \eqref{condition5_bis} can always be solved but its solutions depend on the arbitrary objects $\tau_{ij}$. At this point, all that is left to do is plugging \eqref{Solution_Si} and \eqref{Solution_Soi} into \eqref{TanDe_compS1} and study the resulting equations for $\tau_{ij}$. A straightforward computation tells us that these equations are
\begin{equation}\label{equationtau}
\big(\delta^i_k\delta^j_l-\delta^{ij}\delta_{kl}+\varepsilon \E_k^{\,\,(i}\E^{j)}_{\phantom{j\,\,}l} \big)\tau^{kl}+\xi^{kl}=0\,,
\end{equation}
where $\xi^{kl}$ is another concrete function of the dynamical fields. Now, as we show in appendix \ref{Ap_det}, the $6\times6$ matrix
\begin{equation}\label{Matrix}
{M^{(ij)}}_{\!(kl)}:=\delta^i_k\delta^j_l-\delta^{ij}\delta_{kl}+\varepsilon \E_k^{\,\,(i}\E^{j)}_{\,\,\,\,\,l} \,,
\end{equation}
is always invertible for the field configurations that we are considering in the paper and, hence, it is always possible to solve for all the components of $\tau_{ij}$. One these are known, we can plug them into the expressions for $S_i$ and $S^{0k}$ and, finally, obtain $Z_{\omega}^{IJ}$. The complete expressions are long and not specially illuminating, so we will not give them here. Of course they are simpler in the time gauge.

%
%
%

\section{A streamlined approach to the GNH analysis of the Holst action}\label{sec_streamlined}

\noindent Here we show how the constraints and Hamiltonian vector fields obtained above by using the GNH method can be directly obtained from the field equations (see \cite{GN,Batlle}). This is interesting because knowing in advance that extra secondary constraints are expected and having an idea of their form can be useful. As discussed in section \ref{sec_Action}, the field equations of the $4$-dimensional action \eqref{action} are equivalent to
\begin{subequations}
\begin{align}
&{\bm{De}}^I=0\,,\label{fieldeq_De}\\
&\epsilon_{IJKL}{\bm{e}}^J\wedge\left({\bm{F}}^{KL}-\frac{1}{3}\Lambda {\bm{e}}^K\wedge{\bm{e}}^L\right)=0\,,\label{fieldeq_F}\\
&\Phi^*\mathsf{vol}-\frac{1}{12}\epsilon_{IJKL}{\bm{e}}^I\wedge {\bm{e}}^J\wedge {\bm{e}}^K\wedge {\bm{e}}^L=0\,,\label{fieldeq_unimodularity}\\
&\mathrm{d}\Lambda=0\,,\label{fieldeq_Lambda}
\end{align}
\end{subequations}
where the last equation is redundant as explained in section \ref{sec_introduction}.
Let us write
\begin{align*}
&{\bm{e}}^I=\underline{e}^I+\mathrm{d}\tau\,e_{\mathrm{t}}^I\,,\\
&{\bm{\omega}}^{I}{}_J=\underline{\omega}^I_{\phantom{I}J}+\mathrm{d}\tau\,\omega^{\,\,I}_{\mathrm{t}\phantom{I}J}\,.
\end{align*}
with $e_{\mathrm{t}}^I:=\imath_{\partial_\tau}{\bm{e}}^I$ and $\omega_{\mathrm{t}\,J}^I:=\imath_{\partial_\tau}{\bm{\omega}}^{I}{}_J$. Plugging this decomposition into \eqref{fieldeq_De} gives
\[
0={\bm{De}}^I=\mathrm{d}\underline{e}^I+\underline{\omega}^I_{\phantom{I}J}\wedge\underline{e}^J+\big(\mathrm{d}e^I_{\mathrm{t}}+\underline{\omega}^I_{\phantom{I}J} e^J_{\mathrm{t}}-\omega^I_{\mathrm{t}\,J}\underline{e}^J\big)\wedge \mathrm{d}\tau\,.
\]
Pulling this back to $\Sigma_\tau$, we get (remember that $\jmath_\tau^*\mathrm{d}\tau=0$)
\begin{equation}\label{lig1}
\mathrm{d}\jmath_\tau^*\underline{e}^I+\jmath_\tau^*\underline{\omega}^I_{\phantom{I}J}\wedge\jmath_\tau^*\underline{e}^J=0\,,
\end{equation}
and pulling back to $\Sigma_\tau$ after taking the interior product with $\imath_{\partial_\tau}$ we find
\begin{equation}\label{campo1}
\jmath_\tau^*\pounds_{\partial_\tau}\underline{e}^I=\frac{\mathrm{d}}{\mathrm{d}\tau}(\jmath_\tau^*\underline{e}^I)=\mathrm{d}\jmath_\tau^*e^I_{\mathrm{t}}+\jmath_\tau^*\underline{\omega}^I_{\phantom{I}J}\jmath_\tau^* e^J_{\mathrm{t}}-\jmath_\tau^*\omega^I_{\mathrm{t}\,J}\jmath_\tau^*\underline{e}^J\,.
\end{equation}
By performing the substitutions
\begin{align}
&\jmath_\tau^*\underline{e}^I\rightarrow e^I\,,&&\jmath_\tau^*e_{\mathrm{t}}^I\rightarrow e_{\mathrm{t}}^I\,,&&\frac{\mathrm{d}}{\mathrm{d}\tau}(\jmath_\tau^*\underline{e}^I)\rightarrow Z^I_e\,,\label{sust1}\\
&\jmath_\tau^*\underline{\omega}^I_{\phantom{I}J}\rightarrow \omega^I_{\phantom{I}J}\,,&&\jmath_\tau^* \omega^I_{\mathrm{t}\,J}\rightarrow \omega^I_{\mathrm{t}\,J} \,,\nonumber
\end{align}
in \eqref{lig1} and \eqref{campo1} we get
\begin{align*}
&De^I=0\,,\\
&Z_e^I=De^I_{\mathrm{t}}-\omega^I_{\mathrm{t}\,J}e^J\,.
\end{align*}
An analogous computation for \eqref{fieldeq_F} gives
\begin{align*}
&\hspace*{-1.5cm}\epsilon_{IJKL}{\bm{e}}^J\wedge\left({\bm{F}}^{KL}-\frac{1}{3}\Lambda {\bm{e}}^K\wedge {\bm{e}}^L\right)\\
&=\epsilon_{IJKL}\,\underline{e}^J\wedge\left(\mathrm{d}\underline{\omega}^{KL}+\underline{\omega}^K_{\phantom{K}M}\wedge \underline{\omega}^{ML}-\frac{1}{3}\Lambda\underline{e}^K\wedge\underline{e}^L\right)\\
&+\epsilon_{IJKL}\,e^J_{\mathrm{t}}\mathrm{d}\tau\wedge\left(\mathrm{d}\underline{\omega}^{KL}+\underline{\omega}^K_{\phantom{K}M}\wedge \underline{\omega}^{ML}-\frac{1}{3}\Lambda\underline{e}^K\wedge\underline{e}^L\right)\\
&-\epsilon_{IJKL}\mathrm{d}\tau\wedge \underline{e}^J\wedge\left(-\mathrm{d}\omega^{KL}_{\mathrm{t}}-\underline{\omega}^K_{\phantom{K}M}\omega^{ML}_{\mathrm{t}}
+\underline{\omega}^{LM}\omega_{\mathrm{t}M}^{\phantom{M}K}-\frac{2}{3}\Lambda e^K_{\mathrm{t}}\underline{e}^L\right)=0\,.
\end{align*}
Pulling this back to $\Sigma_\tau$ we get
\begin{equation}\label{lig2}
\epsilon_{IJKL}\jmath_\tau^*\underline{e}^J\wedge\left(\mathrm{d}\jmath_\tau^*\underline{\omega}^{KL}+\jmath_\tau^*\underline{\omega}^K_{\phantom{K}M}\wedge \jmath_\tau^*\underline{\omega}^{ML}-\frac{1}{3}\jmath_\tau^*\Lambda\jmath_\tau^*\underline{e}^K\wedge\jmath_\tau^*\underline{e}^L\right)=0\,,
\end{equation}
and pulling back to to $\Sigma_\tau$ after taking the interior product with $\imath_{\partial_\tau}$ we find
\begin{align}
&-\epsilon_{IJKL}\jmath_\tau^*\underline{e}^J\wedge\frac{\mathrm{d}}{\mathrm{d}\tau}(\jmath_\tau^*\underline{\omega}^{KL})+\epsilon_{IJKL}\jmath_\tau^*e^J_{\mathrm{t}}\left(
\mathrm{d}\jmath_\tau^*\underline{\omega}^{KL}+\jmath_\tau^*\underline{\omega}^K_{\phantom{K}M}\wedge \jmath_\tau^*\underline{\omega}^{ML}-\frac{1}{3}\jmath_\tau^*\Lambda\jmath_\tau^*\underline{e}^K\wedge\jmath_\tau^*\underline{e}^L\right)\nonumber\\
&+\epsilon_{IJKL}\jmath_\tau^*\underline{e}^J\wedge\left(\mathrm{d}\jmath_\tau^*\omega^{KL}_{\mathrm{t}}+\jmath_\tau^*\underline{\omega}^K_{\phantom{K}M}\jmath_\tau^*\omega^{ML}_{\mathrm{t}}
-\jmath_\tau^*\underline{\omega}^{LM}\jmath_\tau^*\omega_{\mathrm{t}M}^{\,\phantom{M}K}+\frac{2}{3}\jmath_\tau^*\Lambda\jmath_\tau^*e^K_{\mathrm{t}}\jmath_\tau^*\underline{e}^L\right)=0\,.
\end{align}
After some straightforward manipulations and using \eqref{sust1} together with
\begin{equation}
\jmath_\tau^*\pounds_{\partial_\tau}\underline{\omega}^I_{\phantom{I}J}=\frac{\mathrm{d}}{\mathrm{d}\tau}(\jmath_\tau^*\underline{\omega}^I_{\phantom{I}J})\rightarrow Z^I_{\omega J}\,,
\end{equation}
these expressions translate into
\begin{align*}
&\epsilon_{IJKL}e^J\wedge\left(F^{KL}-\frac{1}{3}\Lambda e^K\wedge e^L\right)=0\,,\\
&\epsilon_{IJKL}\Big(e^J\wedge\left(Z_\omega^{KL}-D\omega^{KL}_{\mathrm{t}}-\Lambda e^K_{\mathrm{t}}e^L\right)-e^J_{\mathrm{t}}F^{KL}\Big)=0\,.
\end{align*}

Considering now \eqref{fieldeq_unimodularity} and proceeding as in the previous cases, we get
\[
\jmath^*_\tau\imath_{\partial_\tau}(\Phi^*\mathsf{vol})-\frac{1}{3}\jmath^*_\tau (\epsilon_{IJKL} e_{\mathrm{t}}^I e^J\wedge e^K\wedge e^L)=0\,,
\]
which, taking into account \eqref{Zvol}, the previous substitutions, and
\begin{equation}\label{sust3}
\Phi_\tau\rightarrow X\,,\qquad \dot{\Phi}_\tau\rightarrow Z_X\,,
\end{equation}
leads to \eqref{condition1}.

Finally, the same procedure applied to \eqref{fieldeq_Lambda} gives
\begin{align*}
&\mathrm{d}\jmath_\tau^*\Lambda=0\,,\\
&\frac{\mathrm{d}}{\mathrm{d}\tau}\jmath_\tau^*\Lambda=0\,,
\end{align*}
which immediately translate --taking into account that we have $Z_X^\perp\neq0$-- into $\mathrm{d}\Lambda=0$ and $Z_\Lambda=0$ by using
\begin{equation}
\jmath_\tau^*\Lambda\rightarrow\Lambda\,,\qquad \frac{\mathrm{d}}{\mathrm{d}\tau}(\jmath_\tau^*\Lambda)\rightarrow Z_\Lambda\,.
\end{equation}

As we can see, we have been able to obtain \emph{the full set} of constraints and equations for the Hamiltonian vector fields found by using the GNH method. Although we do not expect this to happen always (i.e. not all the constraints may be obtained by pulling back the field equations), by enforcing appropriate tangency requirements it should be possible to arrive at the same final description given by the GNH procedure, once the suitable presymplectic form (obtained by pulling back the canonical symplectic structure to the primary constraint submanifold) is included. This could be a convenient, alternative method to find the Hamiltonian description for singular field theories linear in time derivatives.

%
%
%

\section{Some reflections on the existing literature}\label{sec_appraisal}

The purpose of this section is to explain how our paper fits in the extensive literature on the Hamiltonian formulation for general relativity as derived from the Holst action \cite{Holst}. In the more than twenty five years since the publication of Holst's paper, a number of authors have looked at this question from different perspectives. Although we do not intend to be exhaustive, we will try to mention the most representative papers and compare their results with ours when relevant.

As a general comment, the main difference between the approach that we have followed here and the vast majority of the works on this subject stems from our use of the GNH method instead of Dirac's. As a consequence, our point of view is ``much more geometric''. Although geometry plays a role also in Dirac's approach --for instance in the classification of constraints as first or second class--, and satisfactory geometrizations of Dirac procedure already exist \cite{Diracnos}, the essence of the procedure relies heavily on the interpretation of the Poisson brackets as generators of time evolution. In contrast with this, the central consistency requirement in the GNH method is the \emph{tangency} of the Hamiltonian vector fields to the constraint submanifold.  If boundaries are present this criterion is especially appropriate. At variance with Dirac's approach, Hamiltonian vector fields play a central role because the main goal of the Dirac method is, usually, to find the constraints and study their Poisson brackets.

Works can be classified according to several features: their use or not of the time gauge, the implementation of the full $SO(1,3)$ symmetry or only the $SU(2)$ one, the treatment of second class constrains (solving them or not) and the more or less strict adherence to the Dirac algorithm as originally formulated. In many cases, the discussion of the role of the Immirzi parameter also plays a central role. A rough classification of papers according to these criteria is the following:

\begin{itemize}
\item Papers where the Hamiltonian analysis relies heavily on the use of the time gauge: \cite{Holst,Noui,Perez}.
\item Papers where no time gauge is necessary for the Hamiltonian analysis: \cite{Barros, Alexandrov1,Alexandrov2, Montesinos3}.
\item Papers where the $SO(1,3)$ invariance is explicit: \cite{Alexandrov1,Alexandrov2,Montesinos2, Montesinos3}.
\item Papers where the constraint $De^I=0$ is identified and plays a central role \cite{Noui,Cattaneo1,Cattaneo2}.
\item Papers that address the treatment of second class constraints, by avoiding its introduction, solving them or introducing Dirac brackets: \cite{Barros,Alexandrov1,Cianfrani,Montesinos1,Montesinos2,Montesinos3,Perez}.
\item The paper that, in our opinion, adheres to the letter of Dirac's algorithm in a more clear way is \cite{Perez}.
\item Works introducing constraints quadratic in momenta: \cite{Ahstekarbook, Peldan, Barros}.
\item Some incomplete analyses of the Hamiltonian formulation for unimodular gravity in terms of tetrads can be found in \cite{Smolin,Yamashita}.
\end{itemize}

The actual implementation of Dirac's algorithm is often subtle and it is important to follow it to the letter to avoid conceptual mistakes. A reason for this is the fact that secondary constraints may appear as conditions for the stability of the primary constraints or as conditions for the stability of other secondary constraints (that can show up as consistency conditions for the solvability of the equations for the Lagrange multipliers introduced in the definition of the total Hamiltonian). The analysis of the latter, in particular, is often unpleasant as they tend to be quite complicated. This is probably the reason why the simple form of the constraints that we give here has eluded most Hamiltonian analysis of the Holst action (in particular the fine one appearing in \cite{Perez}, which follows Dirac's method to the letter). Very often, these difficulties are alleviated by introducing the time gauge. This is a standard way to arrive at the Ashtekar formulation, that can be ultimately justified by invoking the Lorentz invariance of the action and the possibility of adapting the tetrads to any given spacetime foliation. Notice, however, that from a general perspective an actual gauge fixing can only be (safely) performed \emph{after the whole set of constraints and the final form of the Hamiltonian vector fields has been determined}. This is probably one of the reasons why a number of authors have discussed the possibility of dispensing with the time gauge.

The constraints $De^I=0$ have appeared in the literature, most notably, in \cite{Noui,Cattaneo1}. In \cite{Noui}, the authors check that this condition is compatible with the dynamics defined by the Holst action by looking at the Hamiltonian dynamics in the time gauge. The approach followed in that paper relies on the geometric interpretation of $De^I=0$ as a vanishing torsion condition whose compatibility with the dynamics is verified. In our approach, that condition is obtained by using the GNH method to obtain the Hamiltonian dynamics without using the time gauge.

The paper \cite{Cattaneo1} merits special attention. There, the authors discuss the Hamiltonian description of general relativity from the Holst action by relying on a geometric method introduced by Kijowski and Tulczyjew \cite{KT}. The frame fields are subject to non-degeneracy conditions equivalent to the ones that we have been naturally compelled to introduce here. Despite the apparent differences in approach, there seems to be a clear correspondence between the results of \cite{Cattaneo1} and ours (if we leave aside the part of our analysis involving parametrization and unimodularity). We list some of them here:

\begin{itemize}
\item The splitting of the conditions contained in $De^I=0$ as \emph{structural} and \emph{residual constraints} is similar to our decomposition as $e^{[I}\wedge De^{J]}=0$ and $\D^{(ij)}=0$. Notice in particular that in order to write \eqref{secondary_4}, it is necessary to choose an internal time-like vector in order to ``make the splitting into $0$ and $i$, $j$ indices''.
\item The part of the presymplectic form on the primary constraint submanifold \eqref{omega_pullback} corresponding to the triads and the spin connection is essentially equation (4.7) of \cite{Cattaneo1}.
\item The final form of the constraints.
\end{itemize}

Our results very strongly suggest that the results described here can be found by following Dirac's approach although, arguably, the necessary computations may be quite involved. It is not that the Dirac approach leads to a complicated reduced space formulation but, rather, that it is difficult to suspect that the nice formulation furnished by the constraints (\ref{constraints_simple}) can be actually found by manipulating the expressions that appear in the implementation of the Dirac algorithm. To a certain extent, this is also true within the GNH approach. Of course, the arguments relying on the derivation of the constraints from the field equations are very helpful in this regard and provide a very useful guide.

%
%
%

\section{Comments and conclusions}\label{sec_conclusions}

\noindent We have studied in detail the Hamiltonian formulation for parametrized unimodular gravity derived from a suitable modification of the Holst action on a 4-dimensional manifold diffeomorphic to $\mathbb{R}\times\Sigma$, with $\Sigma$ closed. We have relied on the GNH method which, owing to its clear geometric foundations, is superior to the more traditional Dirac approach, at least for the purposes of this paper. This is, probably, one of the reasons why we have been able to find a simple way to describe the purely gravitational sector of the theory.

The unimodularity condition has been incorporated in the action by introducing a background volume form (associated with a fixed background metric) and demanding it to be equal to the volume form defined by the tetrads. The presence of this geometric background structure has allowed us to parametrize the model in a non-trivial way by introducing dynamical diffeomorphisms. As we have shown the resulting theory reproduces the behavior of metric parametrized unimodular gravity.

It is somehow surprising that arriving at the concise formulation that we have discussed here requires a non-negligible effort, even when the GNH approach is used. However, in hindsight, it is obvious that such a formulation must exist, as shown by the argument presented in section \ref{sec_streamlined}. As we have shown, the secondary constraints \eqref{secondary_1} can be simplified to the form \eqref{secondary_4}, precisely when the non-degeneracy of the tetrads holds. This is a very neat and sensible result which is probably harder to arrive at by using Dirac's approach.

As described in detail in \cite{nos}, the real Ashtekar formulation for general relativity can be readily derived from the results presented here by using the time gauge fixing. This is, of course, to be expected as the Holst action is known to lead to the Ashtekar formulation. In a sense, the formulation described here can be thought of as the Lorentz invariant precursor of the real Ashtekar formulation. By itself, it has some interesting features:
\begin{itemize}
\item The constraints have a very simple form in the Lorentzian case and are independent of the Immirzi parameter $\gamma$.
\item Only the (pre)symplectic form depends on $\gamma$. Although the presymplectic form is arguably more complicated than the usual one (in fact, it is not written in canonical form), it is not inconceivable that it can be used for quantization. This is an interesting problem that should be looked at.
\item The internal symmetry group in this case is the full Lorentz group. The fact that it is not compact may present technical difficulties for quantization.
\item The Hamiltonian formulation for the Palatini action can be easily found from this one in the $\gamma\rightarrow\infty$ limit.
\item There is no need to introduce constraints quadratic in momenta. Actually, only the primary constraints that appear when the fiber derivative is computed involve the momenta and they do that in a very special way. First, they are linear in momenta and, second, they do not involve the velocities (this is a consequence of the fact that the Lagrangian is linear in the velocities because the action depends linearly on derivatives).
\end{itemize}

An interesting side-product of the present work is the idea of relying on the field equations to arrive at the Hamiltonian formulation, at least for first order theories and theories with actions linear in velocities such as the one discussed here. We would like to add that this is hardly a new idea. Similar ideas can be found in the literature; for instance, it is well known that the equations of motion provide \emph{Lagrangian constraints} for singular Lagrangian systems \cite{GN, DHM} and, among them, the so called \emph{projectable constraints} can be taken to the cotangent bundle of the configuration space as the usual constraints in the Hamiltonian framework (see the preceding papers and also \cite{Batlle, DM}). This method may prove to be specially fruitful when dealing with boundaries in generalizations of gravitational actions written in terms of tetrads.

%
%
\section*{Acknowledgments}
The authors wish to thank Marc Basquens for useful comments. This work has been supported by the Spanish Ministerio de Ciencia Innovaci\'on y Universidades-Agencia Estatal de Investigaci\'on/FIS2017-84440-C2-2-P grant. Bogar Díaz was partially supported by a DGAPA-UNAM postdoctoral fellowship and acknowledges support from the CONEX-Plus program funded by Universidad Carlos III de Madrid and the European Union's Horizon 2020 program under the Marie Sklodowska-Curie grant agreement No. 81538. Juan Margalef-Bentabol is supported by the Eberly Research Funds of Penn State, by the NSF grant PHY-1806356 and by the Urania
Stott fund of Pittsburgh foundation UN2017-92945.

%
%
\appendix

\section{Glossary and Notation}\label{appendix_notation}

Although we have introduced some of the notation used in the paper whenever it was relevant, in order to facilitate its reading we have left a number of definitions for this appendix.

\medskip

\noindent $\bullet$ First, given a volume form $\mathrm{vol}$ in a differentiable manifold $\mathcal{M}$ and a top-form $\alpha$, it is always possible to find a smooth function $f\in C^\infty(\mathcal{M})$ such that $\alpha=f\cdot \mathrm{vol}$. We will often denote such function as
\[
\left(\frac{\alpha}{\mathrm{vol}}\right)\,.
\]
Although this seems cumbersome at first, the unwieldy parentheses are a good reminder of the fact that we are dealing with a scalar density.

\medskip

\noindent $\bullet$ Second, we introduce now the simplified notation used to write the solutions to the equations for $Z_e^I$ and $Z_e^0$ and in the tangency analysis. Let $w$ be the volume form on $\Sigma$
\[
w=\frac{1}{3!}\epsilon_{ijk}e^i\wedge e^j\wedge e^k\,,
\]
(remember that we are working with non-degenerate frames in $\Sigma$). We define
\begin{align*}
&\E^{ij}:=\left(\frac{e^i\wedge e^j\wedge e^0}{w}\right) \,, && \\
&\D^{\circ i}:=\left(\frac{(De^0)\wedge e^i}{w}\right)\,, &&\D^{i\circ}:=\left(\frac{(De^i)\wedge e^0}{w}\right)\,, &&\D^{ij}:=\left(\frac{e^i\wedge De^j}{w}\right)\,,\\
&\F^{i\circ j}:=\left(\frac{e^i\wedge F^{0j}}{w}\right)\,,&& \F^{\circ ij}:=\left(\frac{e^0\wedge F^{ij}}{w}\right)\,,&&\\
&\F^\circ_{\,\,\,\circ i}:=\left(\frac{e^0\wedge F_{0i}}{w}\right)\,,&& \F^{ijk}:=\left(\frac{e^i\wedge F^{jk}}{w}\right)\,.&&
\end{align*}
Notice that $\E^{ij}$, $\F^{\circ ij}$, $\F^{ijk}$, $\F^{i\circ j}$ and $\F^\circ_{\,\,\,\circ i}$ are antisymmetric in the last pair of indices.

%
%
\section{Diffeomorphisms and embeddings as dynamical variables}\label{appendix_diffeos}

We give here a few details about the use of diffeomorphisms and embeddings as dynamical variables. Interested readers are referred to \cite{margalef2018thesis,parame_scalar,EM} for details. Given the 4-dimensional manifolds $\mathbb{R}\times\Sigma$ and $\mathcal{M}$, we use as dynamical variables diffeomeorphisms $\Phi:\mathbb{R}\times\Sigma\rightarrow \mathcal{M}$ such that for every $\tau\in\mathbb{R}$ the embeddings $\Phi_\tau:=\Phi\circ\jmath_\tau:\Sigma\hookrightarrow\mathcal{M}$ have spacelike images (i.e. $\Phi_\tau(\Sigma)\subset\mathcal{M}$ is $g$-spacelike). Here $\jmath_\tau:\Sigma\rightarrow\mathbb{R}\times\Sigma:p\mapsto (\tau,p)$. We will denote the space of such embeddings as $\mathrm{Emb}_s(\Sigma,\mathcal{M})$. The diffeomorphisms that we consider in the paper can be loosely interpreted as curves of embeddings of this type.

We denote the tangent map associated with $\Phi$ as $T\Phi$. The diffeomorphisms that we use are such that the field $\dot{\Phi}:=T\Phi.\partial_\tau$ is transverse to $\Phi_\tau(\Sigma)$ for all $\tau\in\mathbb{R}$. When restricted to a particular embedding $\Phi_\tau$, $\dot{\Phi}$ defines its instantaneous velocity $\dot{\Phi}_\tau$.

Given $X\in\mathrm{Emb}_s(\Sigma,\mathcal{M})$ we can build a vector field over $X$ consisting of future directed, unit normals that we denote as $\mathbf{n}_X$. Notice that $\mathbf{n}_X(p)\in T_{X(p)}\mathcal{M}$ for each $p\in\Sigma$. Now, if we have $Y_X\in\Gamma(X^*T\mathcal{M})$ we can expand it as
\[
Y_X=Y_X^\perp \mathbf{n}_X+TX.Y_X^\top\,,
\]
where $Y_X^\perp=g(\mathbf{n}_X,Y_X)=:n_X(Y_X)$ is a smooth real function on $\Sigma$, $Y^\top_X\in\mathfrak{X}(\Sigma)$ and $TX$ is the tangent map of $X$.

A useful result --that we give without proof-- is
\begin{equation}\label{Zvol}
\jmath_\tau^*\imath_{\partial_\tau}(\Phi^*\mathsf{vol})=\dot{\Phi}^\perp_\tau\mathsf{vol}_{\gamma_{\Phi_\tau}}\,,
\end{equation}
where $\dot{\Phi}^\perp_\tau=\varepsilon n_{\Phi_\tau}(\dot{\Phi}_\tau)$ and $\mathsf{vol}_{\gamma_{\Phi_\tau}}\!\in\Omega^3(\Sigma)$ is the volume form associated with the metric $\gamma_{\Phi_\tau}:=\Phi^*_\tau g$ on $\Sigma$:
\[
\mathsf{vol}_{\gamma_{\Phi_\tau}}(v_1,v_2,v_3):=\mathsf{vol}(\mathbf{n}_{\Phi_\tau},T\Phi.v_1,T\Phi.v_2,T\Phi.v_3)\,,
\]
for $v_1\,,v_2\,,v_3\in\mathfrak{X}(\Sigma)$. Equation \eqref{Zvol} is used in sections \ref{subsec_lagrangian} and \ref{sec_streamlined}.

%
%
\section{Useful mathematical results}\label{appendix_useful_results}

\noindent $\bullet$ The invariant $SO(1,3)$ tensor $P_{IJKL}$ can be inverted whenever $\gamma^2\neq \varepsilon$ in the sense that $P_{IJKL}[P^{-1}]^{KLMN}=\delta_I^{\,\,[M}\delta_J^{\,\,N]}$. The inverse is
\begin{equation}\label{Pinverse}
[P^{-1}]^{IJKL}:=\frac{\gamma}{2(\varepsilon-\gamma^2)}\left(-\varepsilon\gamma\cdot\epsilon^{IJKL}+\eta^{IK}\eta^{JL}-\eta^{JK}\eta^{IL}\right)\,,
\end{equation}
as can be checked by a direct computation.

\medskip

\noindent $\bullet$ The tensor $P_{IJKL}$ satisfies $DP_{IJKL}=0$.

\medskip

\noindent $\bullet$ For any $H^{IJ}\in C^\infty(\Sigma)$ antisymmetric in $I$ and $J$ the following identity holds
\begin{equation}\label{identity1}
[P^{-1}]^{IJ}_{\,\,\,\,\,\,NQ}P^{NK}_{\,\,\,\,\,\,\,\,\,\,\,\,LM}H^Q_{\,\,\,\,\,K}=H^{[I}_{\,\,\,\,\,[L}\delta^{J]}_{\,\,\,\,\,M]} \,.
\end{equation}

\medskip

\noindent \textbf{Proof.}
\begin{equation*}
[P^{-1}]^{IJ}_{\,\,\,\,\,\,\,NQ}P^{NK}_{\,\,\,\,\,\,\,\,\,\,LM}H^Q_{\,\,\,\,K}=H^{[I}_{\,\,\,\,\,[L}\delta^{J]}_{\,\,\,\,\,M]}+\frac{\gamma}{2(\varepsilon-\gamma^2)}\big( \epsilon^{IJ}_{\,\,\,\,\,\,\,K[L}H^K_{\,\,\,\,M]}-\epsilon^{K[I}_{\,\,\,\,\,\,\,\,\,\,\,LM}H^{J]}_{\,\,\,\,\,\,\,K}\big)\,.
\end{equation*}
The last term of this expression can be shown to vanish by computing its dual
\begin{align*}
&\epsilon^{NQ}_{\quad\,\,IJ}\big(\epsilon^{IJ}_{\,\,\,\,\,\,\,K[L}H^{K}_{\,\,\,M]}-\epsilon^{KI}_{\,\,\,\,\,\,\,\,\,LM}H^{J}_{\,\,\,K}\big)\\
&=2\varepsilon\Big(\delta^N_K\delta^Q_{[L}H^{K}_{\,\,\,M]}-\delta^Q_K\delta^N_{[L}H^{K}_{\,\,\,M]}\big)\\
&+\varepsilon\big(\delta^J_K\delta^N_L\delta^Q_M+\delta^N_K\delta^Q_L\delta^J_M+\delta^Q_K\delta^J_L\delta^N_M
-\delta^N_K\delta^J_L\delta^Q_M-\delta^J_K\delta^Q_L\delta^N_M-\delta^Q_K\delta^N_L\delta^J_M\big)H_J^{\,\,\,K}=0\,.\quad\qquad\blacksquare
\end{align*}

\medskip

\noindent$\bullet$ In the following we give the solutions to several types of inhomogeneous linear equations involving differential forms in a 3-dimensional manifold $\Sigma$. We use the notation explained in appendix \ref{appendix_notation}. The proofs are quite direct so they are left to the reader.

\medskip

\noindent i) Let us consider the system of equations
\begin{equation}\label{equation-1}
\epsilon_{ijk}e^j\wedge z^k=u_i \,,
\end{equation}
where the unknowns are $z^k\in\Omega^1(\Sigma)$, with $\Sigma$ a three-dimensional manifold, the triads $e^i\in\Omega^1(\Sigma)$ are such that $w:=(\epsilon_{ijk}e^i\wedge e^j\wedge e^k)/3!$ is a volume form in $\Sigma$ and the $u^i\in\Omega^2(\Sigma)$ are given 2-forms. Then the solutions are 
\begin{equation}\label{sols1}
z^k=\frac{1}{2}\left(\frac{e^i\wedge u_i}{w}\right)e^k-\left(\frac{e^k\wedge u_i}{w}\right)e^i\,.
\end{equation}
Even though \eqref{equation-1} is inhomogeneous it can be solved for any given $u_i$.
\medskip

\noindent ii) Let us consider the system of equations
\begin{equation}\label{equation-2}
\alpha\wedge e^i=\beta^i \,,
\end{equation}
where $\alpha\in\Omega^1(\Sigma)$ is the unknown, $e^i\in\Omega^1(\Sigma)$ are such that $w:=(\epsilon_{ijk}e^i\wedge e^j\wedge e^k)/3!$ is a volume form in $\Sigma$, and the $\beta^i\in\Omega^2(\Sigma)$ are given 2-forms. Equation \eqref{equation-2} can be solved if and only if the inhomogeneous term $\beta^i$ satisfies the condition
\begin{equation}\label{condition_alpha}
\beta^i\wedge e^j+\beta^j\wedge e^i=0\,,
\end{equation}
in which case the solution is
\begin{equation}\label{sols2}
\alpha=\frac{1}{2}\epsilon_{ijk}\left(\frac{\beta^i\wedge e^j}{w}\right)e^k\,.
\end{equation}

\medskip

\noindent iii) Let us consider the equation
\begin{equation}\label{equation-3}
e^i\wedge z_i=u \,,
\end{equation}
where the unknowns are $z_k\in\Omega^1(\Sigma)$, the triads $e^i\in\Omega^1(\Sigma)$ are such that $w:=(\epsilon_{ijk}e^i\wedge e^j\wedge e^k)/3!$ is a volume form in $\Sigma$ and $u\in\Omega^2(\Sigma)$ is a given 2-form. Then the solutions are
\begin{equation}\label{sols3}
z_i=-\frac{1}{2}\epsilon_{ijk}\left(\frac{e^j\wedge u}{w}\right)e^k+\zeta_{ij}e^j\,,
\end{equation}
with $\zeta_{ij}\in C^\infty(\Sigma)$ satisfying $\zeta_{ij}=\zeta_{ji}$ but, otherwise, arbitrary.

\medskip

\noindent$\bullet$ The following identity is useful 
\begin{equation}\label{Id_E}
e^0=\frac{1}{2}\epsilon_{ijk}\E^{ij}e^k \,.
\end{equation}
To prove it we expand $e^0=\lambda_ie^i$ and plug it into
\[
\epsilon_{jkl}\E^{jk}e^l=\epsilon_{jkl}\left(\frac{e^j\wedge e^k\wedge e^0}{w}\right)e^l \,,
\]
to get
\[
\epsilon_{jkl}\E^{jk}e^l=2\lambda_i e^i=2e^0\,.
\]
With the help of \eqref{Id_E}, it is straightforward to prove the following relations
\begin{subequations}
\begin{align}
&\D ^{i\circ}=\frac{1}{2}\epsilon_{klm}\E^{kl}\D^{mi}\,,\label{rel_2}\\
&\F^{\circ ij}=\frac{1}{2}\epsilon_{klm}\E^{kl}\F^{mij}\,,\label{rel_3}\\
&\F^\circ_{\,\,\,\circ i}=\frac{1}{2}\epsilon_{jkl}\E^{jk}\F^l_{\,\,\,\circ i}\,.\label{rel_4}
\end{align}
\end{subequations}
Other useful identities are
\begin{subequations}
\begin{align}
& F^{ij}=\frac{1}{2}\epsilon_{klm}\F^{kij}e^l\wedge e^m\,,\label{rel_5}\\
& F^{0i}=\frac{1}{2}\epsilon_{jkl}\F^{j\circ i}e^k\wedge e^l\,,\label{rel_6}\\
& De^i=\frac{1}{2}\epsilon_{jkl}\D^{ji}e^k\wedge e^l\,,\label{rel_7}\\
& De^0=\frac{1}{2}\epsilon_{ijk}\D^{\circ i}e^j\wedge e^k\,.\label{rel_8}
\end{align}
\end{subequations}

\noindent These identities are all proven in the same way, so we will just show that \eqref{rel_7} holds.

\begin{prop}
$De^i=\frac{1}{2}\epsilon_{jkl}\D^{ji}e^k\wedge e^l$ is equivalent to $\D^{ij}=\left(\frac{e^i\wedge De^j}{w}\right)$.
\end{prop}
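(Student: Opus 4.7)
The plan is to exploit the fact that the non-degeneracy of the triads $\{e^i\}$ on the 3-dimensional manifold $\Sigma$ implies that the set $\{e^k\wedge e^l\}_{k<l}$ is a pointwise basis of $\Omega^2(\Sigma)$. Hence any 2-form admits a unique expansion in that basis, and the two sides of the claimed equivalence are just two different ways of encoding the expansion coefficients of the particular 2-form $De^i$.

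First I would establish the auxiliary identity
\[
e^j\wedge e^k\wedge e^l=\epsilon^{jkl}\,w,
\]
which follows directly from the definition $w=\tfrac{1}{3!}\epsilon_{mnp}\,e^m\wedge e^n\wedge e^p$ together with total antisymmetry in $j,k,l$. I would also use the standard 3-dimensional contraction $\epsilon_{mkl}\epsilon^{nkl}=2\,\delta_m^{\,n}$.

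For the implication $(\Rightarrow)$ I would assume $De^i=\tfrac{1}{2}\epsilon_{jkl}\D^{ji}e^k\wedge e^l$ and wedge with $e^m$ on the left:
\[
e^m\wedge De^i=\tfrac{1}{2}\epsilon_{jkl}\D^{ji}\,e^m\wedge e^k\wedge e^l=\tfrac{1}{2}\epsilon_{jkl}\D^{ji}\,\epsilon^{mkl}\,w=\D^{mi}\,w,
\]
which by the definition of $(\cdot/w)$ gives $\D^{mi}=\left(\tfrac{e^m\wedge De^i}{w}\right)$. For the converse $(\Leftarrow)$ I would expand $De^i=\tfrac{1}{2}A^i_{\;kl}\,e^k\wedge e^l$ with $A^i_{\;kl}=-A^i_{\;lk}$ (possible by non-degeneracy), wedge with $e^m$ to obtain $e^m\wedge De^i=\tfrac{1}{2}A^i_{\;kl}\epsilon^{mkl}\,w$, read off $\D^{mi}=\tfrac{1}{2}A^i_{\;kl}\epsilon^{mkl}$, and invert using $\epsilon_{nkl}\epsilon^{mkl}=2\delta_n^{\,m}$ to get $A^i_{\;kl}=\epsilon_{nkl}\D^{ni}$, which reproduces the stated expansion.

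There is no real obstacle: the argument is entirely algebraic once non-degeneracy is invoked. The only place to be careful is bookkeeping of the antisymmetrization factors and of the sign conventions for $\epsilon_{ijk}$, but no analytic or geometric subtlety enters. The same template will clearly give the analogous proofs of the identities \eqref{rel_5}, \eqref{rel_6} and \eqref{rel_8} that the appendix claims are established in the same way.
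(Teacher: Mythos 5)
Your proposal is correct and follows essentially the same route as the paper: both directions are obtained by wedging with $e^m$ and using $e^m\wedge e^k\wedge e^l=\epsilon^{mkl}w$ together with the uniqueness of the expansion of the 2-form $De^i$ in the basis $\{e^k\wedge e^l\}$. The only cosmetic difference is that the paper parametrizes the unknown coefficients directly as $De^i=\frac{1}{2}\epsilon_{jkl}H^{ji}e^k\wedge e^l$ and shows $H^{ji}=\D^{ji}$, whereas you use a general antisymmetric $A^i_{\ kl}$ and invert with $\epsilon_{nkl}\epsilon^{mkl}=2\delta^m_n$ --- the two bookkeepings are equivalent.
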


\begin{proof}

\medskip

\noindent \boxed{\Rightarrow} $De^i=\frac{1}{2}\epsilon_{jkl}\D^{ji}e^k\wedge e^l$ implies $e^m\wedge De^i=\frac{1}{2}\epsilon_{jkl}\D^{ji}e^m\wedge e^k\wedge e^l=\frac{1}{2}\epsilon_{jkl}\D^{ji}\epsilon^{mkl}w=\D^{mi}w$, whence, $\D^{ij}=\left(\frac{e^i\wedge De^j}{w}\right)$.

\medskip

\noindent \boxed{\Leftarrow} Let us write $De^i=\frac{1}{2}\epsilon_{jkl}H^{ji}e^k\wedge e^l$, then $e^m\wedge De^i=\frac{1}{2}\epsilon_{jkl}H^{ji}e^m\wedge e^k\wedge e^l=\frac{1}{2}\epsilon_{jkl}H^{ji}\epsilon^{mkl}w=H^{mi}w$ which implies $H^{ji}=\D^{ji}$, so that $De^i=\frac{1}{2}\epsilon_{jkl}\D^{ji}e^k\wedge e^l$.\end{proof}

\noindent $\bullet$ $F^I_{\phantom{I}J}\wedge e^J=0$ is equivalent to
\begin{subequations}
\begin{align}
&\F^j_{\phantom{j}\circ j}=0\,,\label{FF1}\\
&\F^{\circ i}_{\phantom{\circ i}\circ}+\F^{ji}_{\phantom{ji}j}=0\,.\label{FF2}
\end{align}
\end{subequations}

\subsection{Rewriting the non-degeneracy condition for the tetrads}\label{Ap_non_deg}

It is interesting to take a close look at the condition \eqref{condition_ev}. To begin with, it is important to notice that, in terms of the objects defined in the preceding section, it can be written as
\begin{equation}\label{condition_ev_new}
e_{\mathrm{t}}^0-\frac{1}{2}\epsilon_{klm}e_{\mathrm{t}}^k\E^{lm}\neq0\,,
\end{equation}
because
\[
\varepsilon_{IJKL}e_{\mathrm{t}}^I e^J\wedge e^K\wedge e^L=6\left(e_{\mathrm{t}}^0-\frac{1}{2}\epsilon_{klm}e_{\mathrm{t}}^k\E^{lm}\right)w\,.
\]
The meaning of \eqref{condition_ev_new} as a non-degeneracy condition for the tetrads is obvious if we write them in matrix form as
\begin{equation}\label{tetrad_non_deg}
e^I=\left[\begin{array}{r|r}
e_{\mathrm{t}}^0&e_{\mathrm{t}}^i\\\hline
\phantom{\big|\!\!} e^0&e^i
\end{array}\right] \,,
\end{equation}
and remember that 
\begin{equation}
e^0=\frac{1}{2}\epsilon_{ijk}e^i\E^{jk}=:\lambda_i e^i\,.\label{lambda}
\end{equation}
From \eqref{tetrad_non_deg}, we can write the 4-metric as
\begin{equation}\label{tetrad_non_deg2}
g=\left[\begin{array}{c|c}
\varepsilon e_{\mathrm{t}}^0e_{\mathrm{t}}^0+e_{\mathrm{t}i}e_{\mathrm{t}}^i&\varepsilon e_{\mathrm{t}}^0 e^{0\top}+e_{\mathrm{t}i} e^{i\top}\\\hline
\phantom{\big|\!\!} \varepsilon e_{\mathrm{t}}^0 e^0+e_{\mathrm{t}i} e^i&\varepsilon  e^0e^{0\top}+e_ie^{i\top}
\end{array}\right]\,,
\end{equation}
and conclude that a necessary and sufficient condition for the 3-metric $q:=\varepsilon e^0e^{0\top}+e_ie^{i\top}$ to be definite positive is $1+\varepsilon \lambda_i\lambda^i>0$. This can be seen by writing $q_{ab}=(\delta_{ij}+\varepsilon \lambda_i\lambda_j){e_a}^{\!i}{e_b}^{\!j}$ and noting that the quadratic form $\delta_{ij}+\varepsilon \lambda_i\lambda_j$ is positive definite if and only if $1+\varepsilon \lambda_i\lambda^i>0$.

\subsection{A determinant computation}\label{Ap_det}

We compute here the determinant of the $6\times6$ matrix obtained by grouping the indices of
\[
M^{(ij)}_{\phantom{(ij)}(kl)}:=\delta^i_k\delta^j_l-\delta^{ij}\delta_{kl}+\varepsilon \E_k^{\phantom{k}(i}\E^{j)}_{\phantom{j)}l} \,,
\]
as the symmetrized pairs $(ij)$ and $(kl)$.  To simplify the computations and interpret the result, it is useful to write $M^{(ij)}_{\phantom{(ij)}(kl)}$ in terms of the $\lambda_i$ defined in \eqref{lambda} as
\[
\delta^{(i}_k\delta^{j)}_l(1+\varepsilon \lambda_m\lambda^m)-\delta^{ij}\delta_{kl}(1+\varepsilon \lambda_m\lambda^m)+\varepsilon \delta^{ij}\lambda_k\lambda_l+\varepsilon \lambda^i\lambda^j\delta_{kl}-\varepsilon\lambda^{(i}\delta^{j)}_k\lambda_l-\varepsilon\lambda^{(i}\delta^{j)}_l\lambda_k\,.
\]
The determinant of $M^{(ij)}_{\phantom{(ij)}(kl)}$ can be obtained with the help of any computer algebra package. It has the simple expression
\[
-\frac{1}{4}\left(1+\varepsilon\lambda_m\lambda^m\right)^2\,,
\]
which is equivalent to
\[
-\frac{1}{4}\left(1+\frac{\varepsilon}{2}\E_{ij}\E^{ij}\right)^2\,.
\]
As can be seen from the discussion in section \ref{Ap_non_deg}, the condition that the 3-metric $\varepsilon e^0e^{0\top}+e_ie^{i\top}$ be positive definite implies that the determinant of $M^{(ij)}_{\phantom{(ij)}(kl)}$ is different from zero.

%
%
\section{Some details about the GNH procedure}\label{appendix_GNH}

\subsection{Computing the pullback of $\Omega$ to $\mathfrak{M}_0$}\label{Ap_GNH_pullback}

In order to do this, we have to compute \eqref{symplectic_form} for vector fields $\mathbb{Y}\,,\mathbb{Z}$ which are tangent to the primary constraint submanifold $\mathfrak{M}_0$. Some of the components of these fields on $\mathfrak{M}_0$ can be obtained from the primary constraints by computing their directional derivatives and requiring them to be zero, so by plugging these particular components into the canonical symplectic form we get the sought for pull-back. The only unfamiliar directional derivatives are those involving embedding-dependent objects (which only show up in the definition of ${\bm{\mathrm{p}}}_{\mathsf{X}}$). In order to compute them, one has to use variations as in \cite{parame_scalar}. In the present case, from the definition of the momenta \eqref{momenta} we get
\begin{subequations}\label{ZcomponentsM0}
\begin{align}
&{\bm{\mathrm{p}}}_{e_{\mathrm{t}}}&&\hspace*{-3mm}\longrightarrow \hspace*{-2mm}&&{\bm{Z}}_{e_{\mathrm{t}}}(\cdot)=0 \,,\\
&{\bm{\mathrm{p}}}_e&&\hspace*{-3mm}\longrightarrow \hspace*{-2mm}&&{\bm{Z}}_e(\cdot)=0 \,,\\
&{\bm{\mathrm{p}}}_{\omega_{\mathrm{t}}}&&\hspace*{-3mm}\longrightarrow \hspace*{-2mm}&&{\bm{Z}}_{\omega_{\mathrm{t}}}(\cdot)=0 \,,\\
&{\bm{\mathrm{p}}}_\omega&&\hspace*{-3mm}\longrightarrow \hspace*{-2mm}&&{\bm{Z}}_\omega(\cdot)=\!\!\int_\Sigma 2P_{IJKL}(\cdot)\wedge Z_e^K\wedge e^L \,,\\
&{\bm{\mathrm{p}}}_\Lambda&&\hspace*{-3mm}\longrightarrow \hspace*{-2mm}&&{\bm{Z}}_\Lambda(\cdot)=0\,,\\
&{\bm{\mathrm{p}}}_{\mathsf{X}}&&\hspace*{-3mm}\longrightarrow \hspace*{-2mm}&&{\bm{Z}}_{\mathsf{X}}(\cdot)=\!\!\int_\Sigma \!\Big(\!\!-\Lambda\pounds_{e_X(\cdot)}Z_{\mathsf{X}}^\perp+\varepsilon Z_\Lambda n_X(\cdot)+\varepsilon n_X(\cdot)\Lambda\mathrm{div}_{\gamma_X}Z_{\mathsf{X}^\top}\!\Big)\mathsf{vol}_{\gamma_X} \,,
\end{align}
\end{subequations}
where $n_X(\cdot)$ and $e_X(\cdot)$ are embedding dependent objects (which are carefully discussed in \cite{parame_scalar}). 

If we demand that the components of $\mathbb{Z}\in \mathfrak{X}(T^*Q)$ take the values given by \eqref{ZcomponentsM0}, the resulting vector field will be tangent to $\mathfrak{M}_0$ so its restriction to $\mathfrak{M}_0$, that will be denoted as $\mathbb{Z}_0$, will be a vector field on $\mathfrak{M}_0$.

Taking into account that $Y^\alpha_{\mathsf{X}}=Y_{\mathsf{X}}^\perp n_X^\alpha+ (TX)_a^\alpha Y_{\mathsf{X}}^{\top a}$ (here $(TX)_a^\alpha$ denotes the tangent map of the embedding $X$ and the indices $a$ and $\alpha$ refer to $\Sigma$ and $\mathcal{M}$ respectively) we see that
\begin{align*}
&\varepsilon n_X(Y_{\mathsf{X}})=Y_{\mathsf{X}}^\perp\,,\\
&e_X(Y_{\mathsf{X}})=Y_{\mathsf{X}}^\top\,,
\end{align*}
and, hence,
\[
\bm{\mathrm{Z}}_{\mathsf{X}}(Y_{\mathsf{X}})=\int_\Sigma\left(-\Lambda\pounds_{Y_{\mathsf{X}}^\top}Z_{\mathsf{X}}^\perp-\pounds_{Z_{\mathsf{X}}^\top}(\Lambda Y_{\mathsf{X}}^\perp)+Z_\Lambda Y_{\mathsf{X}}^\perp\right)\mathsf{vol}_{\gamma_X} \,.
\]
Plugging the previous results into \eqref{symplectic_form}, we finally get
\begin{align*}\label{symplectic_pullback}
\omega(\mathbb{Z}_0,\mathbb{Y}_0)=&\int_\Sigma \Big(2P_{IJKL}\big(Z_\omega^{IJ}\wedge Y_e^K-Y_\omega^{IJ}\wedge Z_e^K\big)\wedge e^L\\
&\hspace*{2cm}+Z_{\mathsf{X}}^\perp\big(Y_\Lambda-\imath_{Y_{\mathsf{X}}^\top}\mathrm{d}\Lambda\big)\mathsf{vol}_{\gamma_X}
-Y_{\mathsf{X}}^\perp\big(Z_\Lambda-\imath_{Z_{\mathsf{X}}^\top}\mathrm{d}\Lambda\big)\mathsf{vol}_{\gamma_X}\Big) \,,
\end{align*}
which, being closed but degenerate, is a presymplectic form on $\mathfrak{M}_0$.

\subsection{Rewriting the new secondary constraints \eqref{secondary_1}}\label{Ap_GNH_rewritingnew}

The constraints \eqref{constraint1_ij_bis} and \eqref{constraint1_i0_bis} can be used to rewrite \eqref{secondary_1} in the much simpler and suggestive form \eqref{secondary_2}. To this end, we first use \eqref{constraint1_ij_bis} and \eqref{constraint1_i0_bis} to rewrite \eqref{secondary_1} as
\[
\epsilon_{klm}e_{\mathrm{t}}^k\big(\D^{il}\E^{jm}+\D^{jl}\E^{im}\big)-2e_{\mathrm{t}}^0\D^{ij}+e_{\mathrm{t}}^i \D^{j\circ}+e_{\mathrm{t}}^j \D^{i\circ}=0\,.
\]
We then use \eqref{rel_2} to transform the last two terms of the preceding expression to get
\begin{equation}\label{secondary_3}
\epsilon_{klm}e_{\mathrm{t}}^k\D^{il}\E^{jm}+\frac{1}{2}\epsilon_{klm}e_{\mathrm{t}}^j\E^{kl}\D^{mi}+\epsilon_{klm}e_{\mathrm{t}}^k\D^{jl}\E^{im}
+\frac{1}{2}\epsilon_{klm}e_{\mathrm{t}}^i\E^{kl}\D^{mj}-2e_{\mathrm{t}}^0 \D^{ij}=0\,.
\end{equation}
The obvious identity
\[
0=\epsilon_{klm}\D^{i[m}e_{\mathrm{t}}^k \E^{jl]} \,,
\]
is equivalent to
\begin{equation}\label{identDij}
\big(\epsilon_{klm}e_{\mathrm{t}}^k\E^{lm}\big)\D^{ij}-\epsilon_{klm}e_{\mathrm{t}}^j\D^{ik}\E^{lm}+2\epsilon_{klm}e_{\mathrm{t}}^k\D^{im}\E^{jl}=0\,,
\end{equation}
which, together with \eqref{constraint1_ij_bis},  allows us to simplify \eqref{secondary_3} to the form \eqref{secondary_2}.
Notice that, with the help of \eqref{constraint1_ij_bis}, we can write \eqref{identDij} also in the form
\begin{equation}\label{identDij2}
\big(\epsilon_{klm}e_{\mathrm{t}}^k\E^{lm}\big)\D^{ij}-\epsilon_{klm}e_{\mathrm{t}}^i\D^{jk}\E^{lm}+2\epsilon_{klm}e_{\mathrm{t}}^k\D^{jm}\E^{il}=0\,.
\end{equation}

\subsection{Some results useful to obtain $Z_\omega^{IJ}$}\label{Ap_GNH_Zomega}

We show that by $D$-differentiating \eqref{condition5_bis} and using \eqref{Ze_const} and the secondary constraints, equation \eqref{tan3_bis} holds. Indeed, by  differentiating \eqref{condition5_bis} and using the constraint $De^I=0$ and the Bianchi identity $DF^{IJ}=0$ we get
\[
\epsilon_{IJKL}e^J\wedge DZ^{KL}_\omega=F^K_{\phantom{K}M}\omega^{ML}_{\mathrm{t}}-F^L_{\phantom{L}M}\omega_{\mathrm{t}}^{MK}-\epsilon_{IJKL}De_{\mathrm{t}}^J(F^{KL}-\Lambda e^K\wedge e^L)\,.
\]
Plugging this now into \eqref{tan3_bis} and using \eqref{Ze_const}, we obtain
\[
\epsilon_{IJKL}\omega_{\mathrm{t}M}^{\,\phantom{M}J}\left(e^{[L}\wedge F^{MK]}-\frac{1}{3}\Lambda e^L\wedge e^M\wedge e^K\right)=0\,,
\]
which holds as a consequence of the constraint \eqref{constraints_simple3}.

%
%
\providecommand{\href}[2]{#2}\begingroup\raggedright\endgroup

\end{document}